\documentclass[parindent=0pt]{article}
\usepackage{booktabs}
\usepackage[T1]{fontenc}
\usepackage[utf8]{inputenc}
\usepackage{makecell}
\usepackage{subcaption}

\usepackage[utf8]{inputenc}     
\usepackage[english]{babel}     
\usepackage[a4paper, left=1in, right=1in, top=1.25in, bottom=1.25in]{geometry}
\usepackage{graphicx}           

\usepackage{amsmath,amssymb}    
\usepackage{amsthm}             
\usepackage{mathtools}          
\usepackage{enumitem}           
\usepackage{listings}           
\usepackage{todonotes}          
\usepackage{hyperref}           
\usepackage{float}
\usepackage{amsthm}
\usepackage{amsmath}
\usepackage{amscd}
\usepackage{amssymb}
\usepackage{mathrsfs}
\usepackage{pdfpages}
\usepackage{setspace}
\usepackage{physics}
\usepackage{tcolorbox}

\usepackage{xcolor}
\usepackage{tikz}
\usepackage[toc,page,title,titletoc,header]{appendix}
 \usepackage{url}
 \usepackage{babel} 
\usepackage{csquotes}
\usepackage[f]{esvect}
\usepackage{blindtext}

\newcommand{\N}{\mathbb{N}}
\newcommand{\E}{\mathbb{E}}
\newcommand{\C}{\mathbb{C}}

\newcommand{\supp}{\mbox{supp}}

\newcommand{\g}{\mathfrak{g}}

\newcommand{\diag}{\mbox{diag}}

\newcommand{\oo}{\mathfrak{o}}

\let\phi=\varphi
\let\epsilon=\varepsilon

\newtheorem{theorem}{Theorem}
\newtheorem{lemma}[theorem]{Lemma}
\newtheorem{result}[theorem]{Result}
\newtheorem{definition}[theorem]{Definition}

\usepackage{authblk} 

\newcommand{\adots}{\mathinner{\mkern1mu \raise1pt
\hbox{.} \mkern2mu \raise4pt \hbox{.} \mkern2mu \raise7pt
\vbox{\kern7pt \hbox{.}} \mkern1mu}}
\def\[{[\![}
\def\]{]\!]}

\newlist{regimes}{enumerate}{1}
\setlist[regimes]{label={\textbullet\ Regime (\arabic*):}, leftmargin=*}

\title{%
Spectral properties of deterministic matrices multiplied by rotationally invariant random non-Hermitian ensembles
}

  \author[1,2]{Pierre Bousseyroux\thanks{Email: pierre.bousseyroux@polytechnique.edu}}
\author[3]{Marc Potters}

\affil[1]{Econophysics Lab, Institut Louis Bachelier, 28 Pl. de la Bourse, Palais Brongniart, 75002 Paris, France}
\affil[2]{LadHyX, UMR CNRS 7646, Ecole Polytechnique, Institut Polytechnique de Paris, 91128 Palaiseau, France}
\affil[3]{Capital Fund Management, Paris, France}

\newenvironment{remarks}{
  \par\vspace{1ex}
  \noindent\textbf{Remarks.}\begin{itemize}\setlength\itemsep{0pt}}
  {\end{itemize}\par\vspace{1ex}}

\makeatletter
\renewcommand{\@fnsymbol}[1]{%
  \ifcase#1\or
    \ensuremath{\dagger}\or
    \ensuremath{\ddagger}\or
    \ensuremath{\mathsection}\or
    \ensuremath{\mathparagraph}\or
    \ensuremath{\|}\or
    \ensuremath{\dagger\dagger}\or
    \ensuremath{\ddagger\ddagger}\or
    \ensuremath{\mathsection\mathsection}\or
    \ensuremath{\mathparagraph\mathparagraph}\else
    \@ctrerr
  \fi}
\makeatother

\begin{document}

\maketitle

\begin{abstract}
In this paper, we study spectral properties of multiplicative deformations of non-Hermitian random matrices. We consider matrices of the form $\vb{A}\vb{B}$, where $\vb{A}$ is a deterministic $N\times N$ matrix (not necessarily Hermitian) and $\vb{B}$ is a rotationally invariant random matrix. We show that, as $N\to\infty$, the boundary of the complex eigenvalue distribution of $\vb{A}\vb{B}$ is governed by simple equations involving the $\mathcal{R}_1$ and $\mathcal{R}_2$ transforms of $\vb{B}$ introduced in~\cite{bousseyroux1}.
\end{abstract}

\section{Introduction}

Random matrix theory often reveals that spectral problems with highly
non-explicit finite-dimensional expressions admit simple deterministic
descriptions in the large-$N$ limit. In the Hermitian setting, this principle is
formalized by Voiculescu's free probability theory: under suitable asymptotic
freeness assumptions, the limiting eigenvalue distribution of sums and products
of Hermitian matrices is governed by free additive and multiplicative
convolutions, encoded respectively by the $R$- and $S$-transforms
\cite{Voiculescu1986addition,Voiculescu1987multiplication,NicaSpeicher2006}.

For non-Hermitian matrices, the situation is considerably more involved. 
The additive problem has been studied using diagrammatic and hermitization methods
\cite{JanikEtAlPRE1997,janik1997non,FeinbergZeeNPB501,FeinbergZeeNPB504,
ChalkerWangPRL1997,Rodgers2010}, and more recently through the $\mathcal{R}$-transform
formalism developed in \cite{bousseyroux1}. In this setting, the classical
$R$-transform is replaced by suitable non-Hermitian generalizations. 

In the present paper, we address the multiplicative problem $\vb A\vb B$.
Compared with the additive case, this problem is much less developed in the
non-Hermitian setting. Existing results mainly rely on diagrammatic methods,
either for products of independent Gaussian matrices
\cite{BurdaJanikNowak2011}, or through the introduction of a non-Hermitian
analogue of the $S$-transform \cite{BurdaJanikNowak2011}. Another possible
approach is operator-valued free probability, where operator-valued
$S$-transforms exist and satisfy twisted multiplicativity properties
\cite{Dykema2006,Speicher2023TwistedS}. However, these works also show that the
resulting relations are substantially more complicated than in the Hermitian
case: they do not seem to reduce to an
effective relation between non-Hermitian $R$- and $S$-type transforms, and
therefore do not provide a practical formula for studying $\vb A\vb B$.

A particularly simple and important case is the bi-invariant one.\footnote{We
say that a random matrix $\vb M$ is bi-invariant if $\vb M$ and
$\vb U\vb M\vb V$ have the same distribution for all unitary matrices
$\vb U,\vb V$.} This is one of the few non-Hermitian settings where a
well-developed and explicit theory is available. In this case, the limiting
eigenvalue distribution is rotationally invariant and can be related to the
limiting singular-value distribution. This is the content of the single ring
picture, first obtained in the physics literature using diagrammatic methods
\cite{feinberg1997non}, then understood in the framework of Brown measures of
$R$-diagonal elements \cite{haagerup2000brown}, and finally proved rigorously
for large bi-invariant random matrices by Guionnet, Krishnapur and Zeitouni
\cite{guionnet2011single}. More precisely, if $\vb{A}$ is bi-invariant and if
\begin{equation}\label{defF}
    F_{\vb A}(r)
    =
    \rho_{\vb A}\big(\{z\in\mathbb C:\ |z|\le r\}\big)
\end{equation}
denotes the radial cumulative distribution function of its limiting eigenvalue
distribution $\rho_{\vb A}$, then the Haagerup--Larsen formula gives
\begin{equation}\label{singlering}
    S_{\vb A\vb A^*}\big(F_{\vb A}(r)-1\big)
    =
    \frac{1}{r^2},
\end{equation}
where $S_{\vb A\vb A^*}$ denotes the scalar $S$-transform of the limiting
eigenvalue distribution of $\vb A\vb A^*$. 

This also makes products of free bi-invariant matrices particularly simple. If
$\vb{A}$ and $\vb{B}$ are free bi-invariant matrices, we have

\begin{equation}\label{multiplication_bi}
        S_{\vb A\vb A^*}\big(F_{\vb A\vb B}(r)-1\big)S_{\vb B\vb B^*}\big(F_{\vb A\vb B}(r)-1\big)
    =
    \frac{1}{r^2}
\end{equation} and so, the radial quantiles multiply:
\begin{equation}
    F_{\vb A\vb B}^{-1}(p)
    =
    F_{\vb A}^{-1}(p)\,
    F_{\vb B}^{-1}(p).
    \label{eq:bi-invariant-quantiles}
\end{equation}
A consequence is that the limiting eigenvalue density of a product of independent
identically distributed matrices coincides with that of a power of a single
matrix drawn from the same ensemble \cite{BurdaNowakSwiech2012}.

The setting of the present paper is more general. We consider multiplicative deformations $\vb{A}\vb{B}$, where $\vb A$ is a deterministic $N\times N$ matrix, not necessarily Hermitian, and $\vb B$ is a rotationally invariant random matrix, in the sense that $\vb{U}\vb{B}\vb{U}^*$ has the same distribution as $\vb{B}$ for any unitary matrix $\vb{U}$. We assume that the random matrices considered in this
work admit a continuous limiting spectral distribution, denoted by $\rho$, in
the complex plane. This distribution is defined as the large-$N$ limit of the
empirical spectral distribution of the complex eigenvalues. From a
mathematical point of view, the natural limiting object is the Brown measure
\cite{Brown1986Lidskii,HaagerupSchultz2007}. We denote by
$\supp(\rho)$ the support of $\rho$, and by $\partial\supp(\rho)$ its boundary,
which corresponds to the spectral edges.

We will also consider the eigenvector self-overlap $\mathcal O(z)$, introduced
by Janik et al.~\cite{janik1999correlations}. This quantity measures the
non-orthogonality of eigenvectors and plays an important role in non-normal
random matrix theory. Throughout the paper we use $\mathcal O(z)=\mathfrak o(z)^2$. 

In this paper, we derive both the spectral density, or equivalently the associated Stieltjes transform, and the self-overlap of $\vb{A}\vb{B}$ from the transforms $\mathcal R_{1, \vb{B}}$ and $\mathcal R_{2, \vb{B}}$ introduced in \cite{bousseyroux1}, together with deterministic quantities associated with $\vb A$. A first application is a relatively simple description of the spectral boundaries of $\vb A\vb B$. These boundary equations can be written in terms of scalar functions which play, for the edge problem, a role analogous to the Hermitian $S$-transform, although they are not sufficient to determine the full two-dimensional spectrum. The method follows the non-Hermitian transform formalism developed in \cite{bousseyroux1}. Beyond their theoretical interest, spectral boundaries play a key role in stability analysis, since the location of spectral edges often signals transitions such as the onset of chaos or the loss of stability.

\section{Main result}

Before stating the main result, we give a brief recapitulation of several transforms that play an important role in the Hermitian and non-Hermitian settings. This is mainly an opportunity to introduce the tools that will be needed in the sequel.

A standard way to approach the eigenvalue distribution of a random matrix $\vb{M}$ is to introduce, in the large-$N$ limit, the Stieltjes transform of $\vb{M}$:
\begin{equation}
     \mathfrak{g}_{\vb{M}}(z)
     :=
     \tau\left[(z\vb{1}-\vb{M})^{-1}\right],
\end{equation}
where throughout the paper we denote
\begin{equation}
    \tau := \lim_{N\to+\infty}\frac{1}{N}\tr.
\end{equation}

The knowledge of $\mathfrak{g}_{\vb{M}}(z)$ allows one to recover the limiting spectral distribution $\rho_{\vb{M}}$ of $\vb{M}$. In the one-dimensional real case, when $\vb{M}$ is Hermitian, one has
\begin{equation}
    \lim_{\epsilon\downarrow 0}\Im \mathfrak{g}_{\vb{M}}(x-i\epsilon)
    =
    \pi \rho_{\vb{M}}(x).
\end{equation}
In the non-Hermitian case, the density is recovered through the two-dimensional analogue, namely Gauss's law,
\begin{equation}
    \partial_{\bar z}\mathfrak{g}_{\vb{M}}(z)
    =
    \pi \rho_{\vb{M}}(z).
\end{equation}

To study $\vb{A}+\vb{B}$ in the Hermitian setting, one can show that there exists a function $R_{\vb{B}}$, called the $R$-transform of $\vb{B}$, such that, for $z$ sufficiently large,
\begin{equation}\label{defR}
    \mathfrak{g}_{\vb{A}+\vb{B}}(z)
    =
    \mathfrak{g}_{\vb{A}}
    \left(
        z - R_{\vb{B}}\left(\mathfrak{g}_{\vb{A}+\vb{B}}(z)\right)
    \right).
\end{equation}
See, for example, the classical works on free probability~\cite{voiculescu1992free,NicaSpeicher2006}, and also~\cite{bun2016rotational} for a derivation using replicas. One obtains as a corollary that, if $\vb{A}$ and $\vb{B}$ are free, then the $R$-transform is additive. From~\eqref{defR}, by taking $\vb{A}=0$, one may equivalently define $R$ by
\begin{equation}
    R(g) = z(g) - \frac{1}{g},
\end{equation}
where $z(g)$ denotes the functional inverse of the Stieltjes transform.

We would also like to have an analogous subordination relation in the multiplicative case. For multiplicative questions, it is more convenient to use the $\mathfrak{t}$-transform, defined by
\begin{equation}
    \mathfrak{t}_{\vb{M}}(z)
    :=
    \tau\left[
        \vb{M}\left(z\vb{1}-\vb{M}\right)^{-1}
    \right]
    =
    z\mathfrak{g}_{\vb{M}}(z)-1.
\end{equation}
One can show, see for instance~\cite{voiculescu1992free,potters2020first}, and also~\cite{bun2016rotational} for the replica approach, that there exists a function $S_{\vb{B}}$, called the $S$-transform of $\vb{B}$, such that, if
\begin{equation}
    \vb{M} = \sqrt{\vb{B}}\vb{A}\sqrt{\vb{B}},
\end{equation}
then
\begin{equation}\label{defS}
    \mathfrak{t}_{\vb{M}}(z)
    =
    \mathfrak{t}_{\vb{A}}
    \left(
        z S_{\vb{B}}\left(\mathfrak{t}_{\vb{M}}(z)\right)
    \right).
\end{equation}
In particular, if $\vb{A}$ and $\vb{B}$ are two Hermitian random matrices, then the result, first obtained in~\cite{voiculescu1992free}, reads
\begin{equation}
    S_{\vb{M}} = S_{\vb{A}}S_{\vb{B}}.
\end{equation}
From~\eqref{defS}, by taking $\vb{A}=\vb{1}$, one may define the $S$-transform by
\begin{equation}\label{definitionS_classical}
    S_{\vb{M}}(t)
    :=
    \frac{t+1}{t\,\mathfrak{t}_{\vb{M}}^{-1}(t)},
\end{equation}
where $\mathfrak{t}_{\vb{M}}^{-1}$ is the functional inverse of the $\mathfrak{t}$-transform.

The $R$- and $S$-transforms are related, see for instance~\cite{potters2020first}:
\begin{equation}\label{relationRS}
\begin{aligned}
    R(g) &= \frac{1}{S(R(g)g)},
    &
    S(g) &= \frac{1}{R(gS(g))}.
\end{aligned}
\end{equation}

This is the reasoning in the Hermitian multiplicative setting that we would like to extend to the non-Hermitian setting. We strongly encourage the reader to consult~\cite{bousseyroux1}, where the notions we will use here are developed in detail. Let $\vb{M}$ be a large random matrix. We will use the following notation:
\begin{equation}
    \mathfrak{g}_{1,\vb{M}}(\omega,z)
    =
    \tau\left[
        \omega
        \left(
            \omega^2\vb{1}
            -
            (z\vb{1}-\vb{M})(z\vb{1}-\vb{M})^*
        \right)^{-1}
    \right],
\end{equation}
and
\begin{equation}
    \mathfrak{g}_{2,\vb{M}}(\omega,z)
    =
    -
    \tau\left[
        (z\vb{1}-\vb{M})^*
        \left(
            \omega^2\vb{1}
            -
            (z\vb{1}-\vb{M})(z\vb{1}-\vb{M})^*
        \right)^{-1}
    \right],
\end{equation}
as well as
\begin{equation}
    \mathcal{G}_{\vb{M}}(\omega,z)
    =
    \begin{pmatrix}
        \mathfrak{g}_{1,\vb{M}}(\omega,z)
        &
        \mathfrak{g}_{2,\vb{M}}(\omega,z)
        \\
        \overline{\mathfrak{g}_{2,\vb{M}}(\omega,z)}
        &
        \mathfrak{g}_{1,\vb{M}}(\omega,z)
    \end{pmatrix}.
\end{equation}

The analogue of~\eqref{defR} is the main result of~\cite{bousseyroux1}, which we restate here. In that article, the functions $\mathcal{R}_1$ and $\mathcal{R}_2$ are introduced as functions derived from a certain $\mathcal{H}$-transform.

\begin{result}\label{main_result}
Let $\vb{A}$ be a large deterministic matrix and let $\vb{B}$ be a rotationally invariant random matrix. Then, for large $\omega$ and $z\in\C$, one expects
\begin{equation}\label{eq_result}
    \mathcal{G}_{\vb{A}+\vb{B}}(\omega,z)
    =
    \mathcal{G}_{\vb{A}}
    \left(
        \omega-\mathcal{R}_{1,\vb{B}}(\mathfrak{g}_1,\mathfrak{g}_2),
        z-\mathcal{R}_{2,\vb{B}}(\mathfrak{g}_1,\mathfrak{g}_2)
    \right),
\end{equation}
where
\begin{equation}
    \mathfrak{g}_1
    =
    \mathfrak{g}_{1,\vb{A}+\vb{B}}(\omega,z)
\end{equation}
and
\begin{equation}
    \mathfrak{g}_2
    =
    \mathfrak{g}_{2,\vb{A}+\vb{B}}(\omega,z).
\end{equation}
This identity can be analytically continued to all $(\omega,z)$, although one must carefully choose the appropriate branches of $\mathcal{R}_1$ and $\mathcal{R}_2$.
\end{result}

Stating this result has allowed us to introduce the functions $\mathcal{R}_1$ and $\mathcal{R}_2$, and, more importantly, to emphasize the role played by the choice of branches. We would like to find an analogue of this result in the multiplicative setting, namely a relation expressing $\mathcal{G}_{\vb{A}\vb{B}}$ implicitly in terms of $\mathcal{G}_{\vb{A}}$. This appears to be a difficult problem. In particular, finding a subordination relation similar to~\eqref{defS} seems more involved in the non-Hermitian setting. Instead, we prove the following main result.

\begin{result}\label{mainresult}
Let $\vb{A}$ be a deterministic or random matrix, and let $\vb{B}$ be a rotationally invariant random matrix, independent of $\vb{A}$ whenever $\vb{A}$ is random.  Let $z\in\C$. The following procedure gives $\mathfrak{g}(z):=\mathfrak{g}_{\vb{A}\vb{B}}(z)$ and $\mathfrak{o}(z):=\mathfrak{o}_{\vb{A}\vb{B}}(z)$.

Find $G$ and $E$ such that
\begin{equation}\label{system_zero}
    \left\{
    \begin{aligned}
        G
        &=
        \tau\left[
            -\mathcal{R}_{1, \vb{B}}(G,E)\vb{A}\vb{A}^*
            \left(
                \mathcal{R}_{1, \vb{B}}(G,E)^2\vb{A}\vb{A}^*
                -
                \left|z\vb{1}-\mathcal{R}_{2, \vb{B}}(G,E)\vb{A}\right|^2
            \right)^{-1}
        \right],
        \\[0.6em]
        E
        &=
        -\tau\left[
            \vb{A}
            \left(z\vb{1}-\mathcal{R}_{2, \vb{B}}(G,E)\vb{A}\right)^*
            \left(
                \mathcal{R}_{1, \vb{B}}(G,E)^2\vb{A}\vb{A}^*
                -
                \left|z\vb{1}-\mathcal{R}_{2, \vb{B}}(G,E)\vb{A}\right|^2
            \right)^{-1}
        \right],
    \end{aligned}
    \right.
\end{equation}
where
\begin{equation}
    |X|^2 := XX^*.
\end{equation}
Then
\begin{equation}
    \mathfrak{g}(z)
    =
    -\tau\left[
        \left(z\vb{1}-\mathcal{R}_{2, \vb{B}}(G,E)\vb{A}\right)^*
        \left(
            \mathcal{R}_{1, \vb{B}}(G,E)^2\vb{A}\vb{A}^*
            -
            \left|z\vb{1}-\mathcal{R}_{2, \vb{B}}(G,E)\vb{A}\right|^2
        \right)^{-1}
    \right],
\end{equation}
and
\begin{equation}
    \mathfrak{o}(z)^2
    =
    G\mathcal{R}_{1, \vb{B}}(G,E)
    \,
    \tau\left[
        \left(
            \mathcal{R}_{1, \vb{B}}(G,E)^2\vb{A}\vb{A}^*
            -
            \left|z\vb{1}-\mathcal{R}_{2, \vb{B}}(G,E)\vb{A}\right|^2
        \right)^{-1}
    \right].
\end{equation}
Here $\mathcal{R}_1$ and $\mathcal{R}_2$ denote appropriate branches.
\end{result}

\begin{remarks}
    \item The case $\vb{B}\vb{A}$ can be treated by taking the conjugate.

    \item In this formula, the transforms $\mathfrak{g}_{1,\vb{A}}$ and $\mathfrak{g}_{2,\vb{A}}$ do not appear in a simple way. This probably explains why finding an $S$-transform generalizing~\eqref{defS} in the non-Hermitian case is difficult.

    \item When $\vb{A}=\vb{1}$, one recovers the known relations presented in~\cite{bousseyroux1}.

    \item Note that taking the limit \(\omega \to 0\) of the diagonal coefficients in Eq.~\eqref{eq_result}, when \(z\) lies outside the spectrum of \(\vb{A}+\vb{B}\) and in a region where \(\g_{\vb{A}+\vb{B}}(z)\) is not identically zero, so that \(\mathcal{R}_1\) does not diverge, yields
\begin{equation}
    \mathfrak{g}_{\vb{A}+\vb{B}}(z)
    =
    \mathfrak{g}_{\vb{A}}
    \left(
        z-\mathcal{R}_{2,\vb{B}}(0,\mathfrak{g}_{\vb{A}+\vb{B}}(z))
    \right).
\end{equation}
This argument has already been carried out in Result~4 of~\cite{bousseyroux1}. The above equation is the analogue of~\eqref{defR} in the non-Hermitian setting, where \(\mathcal{R}_{2,\vb{B}}\) denotes an appropriate branch. Note that, when $|z|$ is large, then $\mathcal{R}_{2,\vb{B}}$ is the principal branch.

Furthermore, Result~\ref{mainresult} gives the following system:
\begin{equation}\label{eq3}
    \mathfrak{t}_{\vb{A}\vb{B}}(z)
    =
    \mathfrak{t}_{\vb{A}}\left(
        \frac{z}{\mathcal{R}_{2,\vb{B}}(0,E)}
    \right),
\end{equation}
where \(E\) satisfies
\begin{equation}
    E\mathcal{R}_{2,\vb{B}}(0,E)
    =
    \mathfrak{t}_{\vb{A}\vb{B}}(z),
\end{equation}
and where \(z\) lies in a region where \(\g_{\vb{A}\vb{B}}(z)\) is not identically zero. In the next section, we shall rewrite equation~\eqref{eq3} so as to introduce a new object, namely the \(S\)-transform of \(\vb{B}\). Note that, when $\tau(\vb{A})=0$, $E=0$ is always a solution of the system. We then obtain
\begin{equation}
    \g_{\vb{A}\vb{B}}(z)=0.
\end{equation}
  \end{remarks}

\section{Applications}
\subsection{Case where $\vb{A}\vb{B}$ behaves as a bi-invariant matrix}

If $\vb{B}$ is a bi-invariant matrix, we know that $\vb{A}\vb{B}$ behaves as a bi-invariant matrix, and one can then use relation~\eqref{multiplication_bi}. One can generalize this fact. Let us apply the main Result~\ref{mainresult} to the case where $\tau(\vb{B}) = 0$ and where $\vb{A}$ has the same distribution as $-\vb{A}$. In this case, $E = 0$ is always a solution of the system~\ref{system_zero}. We then see that $\g_{\vb{A}\vb{B}}(z)$ and $\mathcal{O}_{\vb{A}\vb{B}}(z)$ only depend on $\vb{A}\vb{A}^*$, and therefore on its singular values. Thus, one can replace $\vb{A}$ by a bi-invariant matrix sharing the same singular values, and then apply formula~\eqref{multiplication_bi}.

\begin{result}\label{propcentre}
Let $\vb{A}$ be a random matrix such that $\vb{A}$ and $-\vb{A}$ follow the same distribution. Let $\vb{B}$ be a large rotationally invariant random matrix, independent of $\vb{A}$ and centered in the sense that $\tau(\vb{B}) = 0$. The spectrum of $\vb{A}\vb{B}$ is a ring and, for all $z\in \C$, one has
\begin{equation}\label{formula_produit}
    S_{\vb{A}\vb{A}^*}(F_{\vb{A}\vb{B}}(|z|)-1)
    S_{\vb{B}\vb{B}^*}(F_{\vb{A}\vb{B}}(|z|)-1)
    =
    \frac{1}{|z|^2}.
\end{equation}

Moreover,
\begin{equation}\label{formulaoverlap}
    \mathcal{O}_{\vb{A}\vb{B}}(z)
    =
    \frac{F_{\vb{A}\vb{B}}(|z|)(1-F_{\vb{A}\vb{B}}(|z|))}{|z|^2}.
\end{equation}
\end{result}

\begin{remarks}
\item The overlap formula~\eqref{formulaoverlap} follows directly from the overlap formula in the bi-invariant case given in~\cite{belinschi2017squared}.
\item The outer boundary is a circle of radius
$\sqrt{\tau(\vb{A}\vb{A}^*)\tau(\vb{B}\vb{B}^*)}$,
and the inner radius is given by
\begin{equation}
\frac{1}{\sqrt{\tau\left((\vb{A}\vb{A}^*)^{-1}\right)
\tau\left((\vb{B}\vb{B}^*)^{-1}\right)}}.
\end{equation}
\item We show in Appendix~\ref{from} how to recover equation~\eqref{formula_produit} using only Main Result~\ref{mainresult}. We then recover equation~\eqref{singlering} of the Single Ring Theorem.
\end{remarks}

\subsection{Extension of the scalar Hermitian $S$-transform}

\begin{definition}
Let $\vb{A}$ be a rotationally invariant matrix. We define
\begin{equation}
k^*(\vb{A})
=
\min \left\{
k \in \N^*,
\tau\!\left(\vb{A}^k\right) \neq 0
\right\},
\end{equation}
with values in $\N^* \cup \{+\infty\}$.
\end{definition}

\begin{remarks}
    \item For a bi-invariant matrix $\vb{A}$, one always has $k^*(\vb{A}) = +\infty$.
    \item For a Hermitian matrix $\vb{H}$, one has $k^*(\vb{H}) = +\infty$ if $\vb{H}$ is the zero matrix, $k^*(\vb{H}) = 2$ if $\vb{H}$ is centered and non-zero, and $k^*(\vb{H}) = 1$ otherwise.
\end{remarks}

The first thing one can do is to extend the notion of the scalar $S$-transform defined in \eqref{defS} to the non-Hermitian setting.

\begin{definition}\label{definitionS}
Let $\vb{B}$ be a non-Hermitian rotationally invariant matrix. For each branch of $\mathcal{R}_{2,\vb{B}}$, we define the associated $S$-transforms as the solutions of the equation
\begin{equation}\label{defSS}
    S_{\vb{B}}(g)
    =
    \frac{1}{
        \mathcal{R}_{2,\vb{B}}\left(0,gS_{\vb{B}}(g)\right)
    }.
\end{equation}

By convention, if one can find a branch of $\mathcal{R}_{2,\vb{B}}$ such that
\begin{equation}
    \mathcal{R}_{2,\vb{B}}(0,g)=0,
\end{equation}
then we set
\begin{equation}\label{Sbiinvariant}
    S_{\vb{B}}(t)=+\infty,
\end{equation}
and we say that $+\infty$ is an $S$-transform of $\vb{B}$. The $S$-transforms associated with the principal branch of $\mathcal{R}_{2,\vb{B}}$ will be called the principal $S$-transforms.

Furthermore, we say that $0$ is also an $S$-transform of $\vb{B}$ if one can find a function $\beta(\alpha)$, tending to $0$ as $\alpha\to 0$, such that
\begin{equation}\label{condition}
    \left|\mathcal{R}_{1,\vb{B}}(\alpha,\beta(\alpha))\right|
    \underset{\alpha\to 0}{\longrightarrow}
    +\infty.
\end{equation}
\end{definition}

\begin{remarks}
    \item The classical Hermitian $S$-transform is usually defined under the condition $\tau(\vb{A})\neq 0$. It was also extended in~\cite{rao2007multiplication} to the centered Hermitian case. Here, the definition may still make sense even when $\tau(\vb{A})=0$ and $\vb{A}$ is non-Hermitian.

    \item If $\vb{A}$ is a matrix such that $k:=k^*(\vb{A})\in \N^*$, then
    \begin{equation}
        \mathcal{R}_{2,\vb{A}}(0,g)
        \underset{g\to 0}{\sim}
        \tau\!\left(\vb{A}^{k}\right) g^{k-1}
    \end{equation}
    for the principal branch. Therefore, one should associate with $\vb{A}$ exactly $k$ different principal $S$-transforms, corresponding to the $k$ possible choices of the $k$-th root, and satisfying
    \begin{equation}\label{S0}
        S_{\vb{A}}(g)
        \underset{g\to 0}{\sim}
        \frac{1}{
            \sqrt[k]{\tau\!\left(\vb{A}^{k}\right) g^{k-1}}
        }.
    \end{equation}

    \item There may be several $S$-transforms if there are several branches of $\mathcal{R}_{2,\vb{B}}$.

    \item By the convention~\eqref{Sbiinvariant}, it follows that $+\infty$ is an $S$-transform for any bi-invariant matrix.
    
    \item These different $S$-transforms should not be regarded as analytic continuations of one another. Knowing one branch of the $S$-transform does not allow one, in general, to recover the others.

    \item As explained in Result~4 of \cite{bousseyroux1}, condition~\eqref{condition} can occur when the distribution of $\vb{B}$ gives rise to a charge-free region with vanishing field. A typical example is a bi-invariant matrix whose spectral support is a ring rather than a disk: the origin then lies inside the hole, where the associated field is zero.
\end{remarks}

Using this definition, one can then rewrite \eqref{eq3} as the following result.

\begin{result}
Let $\vb{A}$ and $\vb{B}$ be two large independent rotationally invariant random matrices. Let $z$ be outside the spectrum of $\vb{A}\vb{B}$. Then,
\begin{equation}\label{re3}
\mathfrak{t}_{\vb{A}\vb{B}}(z) = \mathfrak{t}_{\vb{A}}(z S_{\vb{B}}(\mathfrak{t}_{\vb{A}\vb{B}}(z))),
\end{equation}
where $S_{\vb{B}}$ denotes one $S$-transform of $\vb{B}$.

Note that, when $|z|$ is large, $S_{\vb{B}}$ is a principal $S$-transform of $\vb{B}$.
\end{result}

\begin{remarks}
\item Taking $\vb{A}=\vb{1}$, we obtain
\begin{equation}
S_{\vb{M}}(\mathfrak{t}_{\vb{M}}(z))
=
\frac{\mathfrak{t}_{\vb{M}}(z)+1}{\mathfrak{t}_{\vb{M}}(z)z},
\end{equation}
which is the analogue of \eqref{defS} in the Hermitian setting.

\item Let us recall that there are three cases in the definition of the $S$-transform of $\vb{B}$; see Def.~\ref{definitionS}. The first case is $S_{\vb{B}}=+\infty$, which occurs when $\vb{B}$ is bi-invariant. In this case, the right-hand side of equation~\eqref{re3} is equal to $0$. Hence $\mathfrak{t}_{\vb{A}\vb{B}}(z)=0$, meaning that $\vb{A}\vb{B}$ behaves like the zero matrix outside the spectrum. This was already known, since $\vb{A}\vb{B}$ is bi-invariant whenever $\vb{B}$ is bi-invariant.

The second case is $S_{\vb{B}}=0$. In this case, one obtains $\mathfrak{t}_{\vb{A}\vb{B}}(z)=\mathfrak{t}_{\vb{A}}(0)=-1$, and therefore $\mathfrak{g}_{\vb{A}\vb{B}}(z)=0$, which means that $z$ lies in a region where the field vanishes.

In all other cases, the formula
\begin{equation}\label{definitionS_nonhermitian_alt}
S_{\vb{M}}(t)
:=
\frac{t+1}{t \mathfrak{t}_{\vb{M}}^{-1}(t)},
\end{equation}
where $\mathfrak{t}_{\vb{M}}^{-1}$ denotes the functional inverse of the $\mathfrak{t}$-transform, provides some $S$-transforms of $\vb{M}$, but a priori not all of them.

\end{remarks}

A direct consequence of the previous result is the following result.

\begin{result}
Let $\vb{A}$ and $\vb{B}$ be two large independent rotationally invariant random matrices, and let
$\vb{M}=\vb{A}\vb{B}$. Let $S_{\vb{A}\vb{B}}$ be an $S$-transform of $\vb{A}\vb{B}$ such that, locally,
\begin{equation}\label{e}
    S_{\vb{A}\vb{B}}(\mathfrak{t}_{\vb{A}\vb{B}}(z))
    =
    \frac{\mathfrak{t}_{\vb{A}\vb{B}}(z)+1}
    {\mathfrak{t}_{\vb{A}\vb{B}}(z)z}.
\end{equation}
Then one can find an $S$-transform $S_{\vb{A}}$ of $\vb{A}$ and an $S$-transform $S_{\vb{B}}$ of $\vb{B}$ such that
\begin{equation}
    S_{\vb{A}\vb{B}}
    =
    S_{\vb{A}}S_{\vb{B}}.
\end{equation}

In particular, if $S_{\vb{A}\vb{B}}$ is a principal $S$-transform of $\vb{A}\vb{B}$, then one can find a principal $S$-transform $S_{\vb{A}}$ of $\vb{A}$ and a principal $S$-transform $S_{\vb{B}}$ of $\vb{B}$ such that
\begin{equation}
    S_{\vb{A}\vb{B}}
    =
    S_{\vb{A}}S_{\vb{B}}.
\end{equation}
\end{result}

\begin{remarks}
    \item If $0$ is neither an $S$-transform of $\vb{A}$ nor an $S$-transform of $\vb{B}$, then $0$ cannot be an $S$-transform of $\vb{A}\vb{B}$. Hence, it is not possible to have an empty charge region where the field vanishes.
    \item Consider a principal $S$-transform $S_{\vb{A}\vb{B}}$ such that Eq.~\eqref{e} holds for large $z$. Assume that $k^*(\vb{A})$ and $k^*(\vb{B})$ are finite integers greater than or equal to $2$. Suppose moreover that
\begin{equation}
    S_{\vb{A}\vb{B}}\neq +\infty.
\end{equation}
Then, using the previous result and Eq.~\eqref{S0}, there exists $C>0$ such that
\begin{equation}
    |S_{\vb{A}\vb{B}}(g)|
    \underset{g\to 0}{\sim}
    \frac{C}{
        |g|^{2-\frac{1}{k^*(\vb{A})}-\frac{1}{k^*(\vb{B})}}
    }.
\end{equation}
On the other hand, since $S_{\vb{A}\vb{B}}$ is assumed to be a principal $S$-transform, Eq.~\eqref{S0} gives
\begin{equation}
    |S_{\vb{A}\vb{B}}(g)|
    \underset{g\to 0}{\sim}
    \frac{C'}{
        |g|^{1-\frac{1}{k^*(\vb{A}\vb{B})}}
    },
\end{equation}
for some $C'>0$. Therefore,
\begin{equation}
    1-\frac{1}{k^*(\vb{A}\vb{B})}
    =
    2-\frac{1}{k^*(\vb{A})}
      -\frac{1}{k^*(\vb{B})}.
\end{equation}
This is impossible, since the left-hand side is strictly smaller than $1$, whereas the right-hand side is greater than or equal to $1$.

Thus, necessarily,
\begin{equation}
    S_{\vb{A}\vb{B}}=+\infty.
\end{equation}
We obtain the following consequence: if $\vb{A}$ and $\vb{B}$ are centered matrices, then $\vb{A}\vb{B}$ behaves like the zero matrix outside the spectrum, in the sense that $+\infty$ is the only principal $S$-transform of $\vb{A}\vb{B}$.
\end{remarks}

\subsection{Spectral boundaries}

For any matrix $\vb{A}$, we define
\begin{equation}
    f_{\vb{A}}(z)
    =
    \tau\left(\vb{A}\vb{A}^*|z-\vb{A}|^{-2}\right),
\end{equation}
where $z$ is a complex number. Following the definition of~\cite{bousseyroux3}, we also define
\begin{equation}
    h_{\vb{A}}(z)
    =
    \tau\left(|z-\vb{A}|^{-2}\right),
\end{equation}
where again $z$ is a complex number.

\begin{result}\label{result_boundaries}
Let $\vb{A}$ be a large deterministic matrix and let $\vb{B}$ be a rotationally invariant random matrix.

\begin{itemize}
    \item \textbf{Boundary of type 1:} If $\vb{B}$ is bi-invariant, then the spectral boundary of type 1 of $\vb{A}\vb{B}$ is given by an outer circle of radius
\begin{equation}\label{rplus}
    \sqrt{\tau(\vb{A}\vb{A}^*)\tau(\vb{B}\vb{B}^*)}.
\end{equation}
    If $\vb{B}$ is not bi-invariant, the spectral boundaries of $\vb{A}\vb{B}$ can be written as the image of the points $x \in \C$ satisfying
    \begin{equation}
        \partial_{\alpha} \mathcal{R}_{1,\vb{B}}
        \left(0,\mathfrak{t}_{\vb{A}}(x)S_{\vb{B}}(\mathfrak{t}_{\vb{A}}(x))\right)
        \left|S_{\vb{B}}(\mathfrak{t}_{\vb{A}}(x))\right|^2
        =
        \frac{1}{f_{\vb{A}}(x)},
    \end{equation}
    under the map
    \begin{equation}
        x \mapsto \frac{x}{S_{\vb{B}}(\mathfrak{t}_{\vb{A}}(x))},
    \end{equation}
    where $\mathcal{R}_{1,\vb{B}}$ denotes an appropriate branch and $S_{\vb{B}}$ is one $S$-transform of $\vb{B}$.

    \item \textbf{Boundary of type 2:} If $\tau(\vb{A}^{-1}) = 0$, then the spectral boundary of type 2 is an inner circle of radius
\begin{equation}\label{rmoins}
    \frac{1}{\sqrt{\tau\left((\vb{A}\vb{A}^*)^{-1}\right)\tau\left((\vb{B}\vb{B}^*)^{-1}\right)}}.
\end{equation}
    Otherwise, it is given by the points $z \in \C$ satisfying
    \begin{equation}\label{edge2}
        |z|^2
        h_{\vb{B}}\left(z\tau(\vb{A}^{-1})\right)
        =
        \frac{1}{
        \tau\left((\vb{A}\vb{A}^*)^{-1}\right)
        -
        |\tau(\vb{A}^{-1})|^2
        }.
    \end{equation}
\end{itemize}
\end{result}

\begin{remarks}
    \item In the case where $\vb{B}$ is a bi-invariant matrix, using the single-ring theorem~\cite{FeinbergZeeNPB501, guionnet2011single}, the formulas \eqref{rmoins} and \eqref{rplus} are already known.    
    \item If $\vb{B} = \alpha I_N + \vb{C}$, where $\vb{C}$ is a bi-invariant matrix and $\alpha\in \C^*$, we shall see in the examples below that the transforms can be computed rather explicitly. In this case, the boundaries of type $1$ are given by the points $z\in \C$ such that
    \begin{equation}\label{biinvariantcase}
        f_{\vb{A}}(z/\alpha)
        =
        \frac{|\alpha|^2}{r_{+,\vb{C}}^2}.
    \end{equation}

    Concerning the boundary of type $2$, we use equation \eqref{edge2} by taking
\begin{equation}
    h_{\vb{B}}(z)
    =
    \frac{1}{r_{-,\vb{C}}^2 - |z-\alpha|^2}.
\end{equation}
After several algebraic manipulations, we again obtain a circle with center
\begin{equation}
    c
    =
    \frac{
        \left|\tau(\vb{A}^{-1})\right|^2
    }{
        \tau\!\left((\vb{A}\vb{A}^*)^{-1}\right)
    }
    \alpha
\end{equation}
and radius squared
\begin{equation}
    R^2
    =
    \frac{
        \left|\tau(\vb{A}^{-1})\right|^2
        \left[
            \tau\!\left((\vb{A}\vb{A}^*)^{-1}\right)
            -
            \left|\tau(\vb{A}^{-1})\right|^2
        \right]
        |\alpha|^2
    }{
        \tau\!\left((\vb{A}\vb{A}^*)^{-1}\right)^2
    }
    +
    \frac{
        r_{-,\vb{C}}^2
    }{
        \tau\!\left((\vb{A}\vb{A}^*)^{-1}\right)
    }.
\end{equation}
Notice that, when $\alpha = 0$, we recover equation \eqref{rmoins}.

    \item If one studies $\vb{A}(I_N+\vb{B})$, with $\vb{A}$ Hermitian and $\vb{B}$ bi-invariant, then the boundary of type $1$ becomes particularly simple. The previous remark implies that the spectral boundary is given by the set of complex numbers $z\in \C$ satisfying
    \begin{equation}\label{caseH}
        r_{+,\vb{B}}^2 \Im\left(z \mathfrak{t}_{\vb{A}}(z)\right)
        =
        -\Im(z).
    \end{equation}
    Let us now examine the limit as $r_{+,\vb{B}} \to 0$. Denote by $\rho_{\vb{A}}$ the limiting spectral density of the real eigenvalues of $\vb{A}$, supported on the compact interval $[\lambda_{-},\lambda_{+}]$. In this regime, one finds
    \begin{equation}\label{traceMP}
        |\Im(z)|
        \underset{r_{+,\vb{B}}\to 0}{\sim}
        \Re(z)^2 r_{+,\vb{B}}^2\,\pi\,\rho_{\vb{A}}(\Re(z)).
    \end{equation}

    \item One can also obtain the corresponding statement for $\vb{B}\vb{A}$ by passing to the conjugate.

    \item Let us emphasize that this is not a generalization of the Hermitian case. Indeed, in the Hermitian setting, the natural object is rather $\vb{A}^{1/2}\vb{B}\vb{A}^{1/2}$.

    \item There are two types of spectral edges, and we note that the second one, described by equation~\eqref{edge2}, provides a rather universal spectral edge, which depends on the matrix $\vb{A}$ only through the two quantities $\tau(\vb{A}^{-1})$ and $\tau\left((\vb{A}\vb{A}^*)^{-1}\right)$.

    \item One should be careful with the interpretation of the result: we are not claiming that the curves given by the result will necessarily be spectral edges, but rather that the spectral edges of $\vb{A}\vb{B}$ must be described in this way.
\end{remarks}

In~\cite{bousseyroux3}, one can see that, in order to determine the spectral boundary of a matrix $\vb{M}$, it is enough to know the functions
\begin{equation}
    g \mapsto \partial_{\alpha}\mathcal{R}_{1,\vb{M}}(0,g)
    \qquad \text{and} \qquad
    g \mapsto \mathcal{R}_{2,\vb{M}}(0,g).
\end{equation}
Thus, by additivity of these transforms, if $\vb{M}=\vb{A}+\vb{B}$, it is enough to know
\begin{equation}
    g \mapsto \partial_{\alpha}\mathcal{R}_{1,\vb{A}}(0,g),
    \qquad
    g \mapsto \partial_{\alpha}\mathcal{R}_{1,\vb{B}}(0,g),
\end{equation}
and
\begin{equation}
    g \mapsto \mathcal{R}_{2,\vb{A}}(0,g),
    \qquad
    g \mapsto \mathcal{R}_{2,\vb{B}}(0,g).
\end{equation}
The question we now ask is what happens when $\vb{M}=\vb{A}\vb{B}$. More precisely, can one express the spectral boundary of $\vb{M}$ only in terms of the respective $\mathcal{R}$-transforms of $\vb{A}$ and $\vb{B}$? The following result shows that, in fact, it is enough to know
\begin{equation}
    g \mapsto \partial_{\alpha}\mathcal{R}_{1,\vb{A}^{-1}}(0,g),
    \qquad
    g \mapsto \partial_{\alpha}\mathcal{R}_{1,\vb{B}}(0,g),
\end{equation}
and
\begin{equation}
    g \mapsto \mathcal{R}_{2,\vb{A}}(0,g),
    \qquad
    g \mapsto \mathcal{R}_{2,\vb{B}}(0,g).
\end{equation}
The proof is in fact very simple. It is enough to observe that $z$ is a spectral boundary point of $\vb{A}\vb{B}$ if and only if $0$ is a spectral boundary point of $\vb{B}-z\vb{A}^{-1}$. One can then apply Result~2 of~\cite{bousseyroux3}, which gives the following result.

\begin{result}\label{th3}
   Let $\vb{A}$ be a large deterministic matrix and let $\vb{B}$ be a rotationally invariant random matrix. The spectral boundaries of $\vb{A}\vb{B}$ can be written as follows.

    \begin{itemize}
        \item \textbf{Boundary of type 1:} as $v/u$, where $(u,v)\in \C^2$ satisfies the following system of equations:
        \begin{equation}
            |u|^2
            \partial_{\alpha}\mathcal{R}_{1,\vb{B}}(0,u)
            +
            |v|^2
            \partial_{\alpha}\mathcal{R}_{1,\vb{A}^{-1}}(0,v)
            =
            1,
        \end{equation}
        and
        \begin{equation}
            u\mathcal{R}_{2,\vb{B}}(0,u)
            +
            v\mathcal{R}_{2,\vb{A}^{-1}}(0,v)
            =
            -1.
        \end{equation}
        Here $\mathcal{R}_{2,\vb{B}}$, $\mathcal{R}_{2,\vb{A}^{-1}}$, $\mathcal{R}_{1,\vb{B}}$, and $\mathcal{R}_{1,\vb{A}^{-1}}$ denote suitable branches.

        \item \textbf{Boundary of type 2:} the points $z\in \C$ such that
        \begin{equation}
    \mathcal{R}_{1,\vb{B}}
    \left(
        \alpha,
        -\bigl(\tau(\vb{B}) - z\tau(\vb{A}^{-1})\bigr)\alpha^2
    \right)
    +
    |z|^2
    \mathcal{R}_{1,\vb{A}^{-1}}
    \left(
        |z|\alpha,
        -\bigl(\tau(\vb{B}) - z\tau(\vb{A}^{-1})\bigr)\alpha^2
    \right)
    =
    -\frac{1}{\alpha}
    -
    \left|\tau(\vb{B}) - z\tau(\vb{A}^{-1})\right|^2\alpha
    +
    o(\alpha),
\end{equation}
        where $\mathcal{R}_{2,\vb{B}}$, $\mathcal{R}_{2,\vb{A}^{-1}}$, $\mathcal{R}_{1,\vb{B}}$, and $\mathcal{R}_{1,\vb{A}^{-1}}$ denote suitable branches.
    \end{itemize}
\end{result}

\begin{remarks}
    \item Once again, we are not claiming that the curves given by the result will necessarily be the spectral edges of $\vb{A}\vb{B}$, but rather that the spectral edges will be described in this way.

    \item By taking $\vb{A}=\vb{I}$, one simply recovers the spectral edges described in~\cite{bousseyroux3}.

    \item The advantage of presenting the result in this form is that the symmetry
    \begin{equation}
        \vb{A}\vb{B}
        \mapsto
        \vb{B}^{-1}\vb{A}^{-1}
    \end{equation}
    becomes rather transparent; it simply corresponds to inverting the eigenvalues.
    \item If $\vb{A}$ and $\vb{B}$ are bi-invariant matrices, one can recover equations~\eqref{rmoins} and~\eqref{rplus}.
\end{remarks}

The previous result highlights two new functions, namely
\begin{equation}
    g\mapsto
    \partial_\alpha\mathcal R_{1,\vb A^{-1}}(0,g)
    \qquad \text{and} \qquad
    g\mapsto
    \mathcal R_{2,\vb A^{-1}}(0,g).
\end{equation}
To determine the second one, one can use the usual techniques based on the $S$-transform; see~\cite{potters2020first}. The idea is to notice that
\begin{equation}
    \mathfrak{t}_{\vb{A}}(z) + \mathfrak{t}_{\vb{A}^{-1}}(1/z) + 1 = 0,
\end{equation}
and therefore, using~\eqref{definitionS_nonhermitian_alt}, one can deduce that if $S_{\vb A}$ comes from the use of equation~\eqref{definitionS_nonhermitian_alt}, then
\begin{equation}\label{inverseS}
    S_{\vb{A}^{-1}}(t)
    =
    \frac{1}{S_{\vb{A}}(-t-1)}
\end{equation}
is indeed an $S$-transform of $\vb A^{-1}$. Thus, using the relations~\eqref{relationRS}, one obtains one branch of
\begin{equation}
    g\mapsto
    \mathcal R_{2,\vb A^{-1}}(0,g).
\end{equation}

The theory developed here aims to provide efficient conjectures for computing the boundary of the spectrum. The reasoning above should be understood as a way, starting from an $S$-transform of $\vb A$, to potentially compute one for $\vb A^{-1}$, without being certain that all possible branches are controlled. One may then try this branch and obtain conjectures for the spectral boundary. We now state a new lemma, explained in the appendix, in this spirit. We use the word ``can'', since we currently have no control over all the possible branches associated with $\vb A^{-1}$.

\begin{lemma}\label{propinverse}
    Let $\vb A$ be a large random matrix. The expression
    \begin{equation}
        \partial_\alpha
        \mathcal R_{1,\vb A^{-1}}(0,g)
        =
        \frac{
            |\mathcal R_{2,\vb A^{-1}}(0,g)|^4
            \,
            \partial_\alpha
            \mathcal R_{1,\vb A}(0,u)
        }
        {
            1
            -
            |\mathcal R_{2,\vb A^{-1}}(0,g)|^2
            \left(
                1
                +
                2\Re\bigl(g\mathcal R_{2,\vb A^{-1}}(0,g)\bigr)
            \right)
            \partial_\alpha
            \mathcal R_{1,\vb A}(0,u)
        },
    \end{equation}
    where
    \begin{equation}
        u
        =
        -\mathcal R_{2,\vb A^{-1}}(0,g)
        -
        g\mathcal R_{2,\vb A^{-1}}(0,g)^2,
    \end{equation}
    can provide one branch of
    \begin{equation}
        g\mapsto
        \partial_\alpha\mathcal R_{1,\vb A^{-1}}(0,g),
    \end{equation}
    where $\mathcal R_{2,\vb A^{-1}}$ and $\mathcal R_{1,\vb A}$ denote suitable branches associated respectively with $\vb A^{-1}$ and $\vb A$.
\end{lemma}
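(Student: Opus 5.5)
The plan is to reduce everything to the small-$\omega$ (linear-in-$\alpha$) expansion of the defining relation of the $\mathcal R$-transforms, outside the spectrum, and then compare the quantities attached to $\vb A$ and $\vb A^{-1}$ through the algebraic identity $z\vb 1-\vb A^{-1}=\vb A^{-1}(z\vb A-\vb 1)$.

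\textbf{Step 1: a scalar characterisation of $\partial_\alpha\mathcal R_1(0,\cdot)$.} Take the first argument equal to $0$ in Result~\ref{main_result}. For any matrix $\vb M$ this gives $\mathcal G_{\vb M}(\omega,z)=\mathcal G_{0}(\omega-\mathcal R_{1,\vb M},z-\mathcal R_{2,\vb M})$, where $\mathcal G_0$ is explicit. At order $\omega^0$, with $g=\mathfrak g_{\vb M}(z)$, we get $z-\mathcal R_{2,\vb M}(0,g)=1/g$. At order $\omega^1$ we have $\mathfrak g_{1,\vb M}(\omega,z)=-\omega\,h_{\vb M}(z)+O(\omega^3)$. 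Writing $\alpha=\mathfrak g_1$, $\omega-\mathcal R_1\simeq \omega-\alpha\,\partial_\alpha\mathcal R_{1,\vb M}(0,g)$, and using $\mathfrak g_{1,0}(w,\zeta)\simeq -w/|\zeta|^2$, we obtain a linear equation in $\omega$. Solving it should give
\begin{equation*}
\partial_\alpha\mathcal R_{1,\vb M}(0,g)=\frac{1}{|g|^2}-\frac{1}{h_{\vb M}(z)},\qquad z=\frac1g+\mathcal R_{2,\vb M}(0,g).
\end{equation*}
The signs must be checked against the conventions of \cite{bousseyroux1}. This is essentially the computation behind Result~2 of \cite{bousseyroux3}, and I would import it from there.

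\textbf{Step 2: transfer from $\vb A^{-1}$ to $\vb A$.} Set $\vb M=\vb A^{-1}$, $R:=\mathcal R_{2,\vb A^{-1}}(0,g)$ and $z=1/g+R$. Put $w=1/z$. Then
\begin{equation*}
h_{\vb A^{-1}}(z)=\tau\bigl(\vb A^*|z\vb A-\vb 1|^{-2}\vb A\bigr)=|w|^2 f_{\vb A}(w).
\end{equation*}
Expanding $\vb A\vb A^*$ with $\vb A=w-(w-\vb A)$ gives
\begin{equation*}
f_{\vb A}(w)=|w|^2h_{\vb A}(w)-2\Re\bigl(w\,\mathfrak g_{\vb A}(w)\bigr)+1.
\end{equation*}
Next, the identity $\mathfrak t_{\vb A}(w)+\mathfrak t_{\vb A^{-1}}(1/w)+1=0$ expresses $u:=\mathfrak g_{\vb A}(w)$ through $g$ and $R$. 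Since $w=g/(1+gR)$, a short computation should give $u=-R-gR^2$, which is exactly the $u$ of the statement. Equivalently, this is the branch of $\mathcal R_{2,\vb A}$ obtained from \eqref{inverseS} and \eqref{relationRS}.

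\textbf{Step 3: substitute and simplify.} Apply Step~1 to $\vb A$ at the point $w$ with Stieltjes value $u$. This yields $h_{\vb A}(w)=\bigl(|u|^{-2}-\partial_\alpha\mathcal R_{1,\vb A}(0,u)\bigr)^{-1}$. Insert this into $f_{\vb A}(w)$, then into $h_{\vb A^{-1}}(z)$, and finally into Step~1 for $\vb A^{-1}$. All moduli can be rewritten in terms of $g$ and $R$ using $|w|^2=|g|^2/|1+gR|^2$ and $|u|^2=|R|^2|1+gR|^2$. The algebra should collapse: the combination $1+2\Re(gR)$ appears when $|1+gR|^2-|gR|^2$ is combined with $-2\Re(wu)=2\Re(gR)$. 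The result is the stated fraction, with $|R|^4\,\partial_\alpha\mathcal R_{1,\vb A}(0,u)$ in the numerator.

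\textbf{Main obstacle.} I expect the main difficulty to be this last simplification together with sign and branch bookkeeping. In particular, one has to verify that the $|g|^{-2}$ terms cancel exactly, and one has to choose conventions for $\mathfrak g_1$ consistently between \cite{bousseyroux1} and \cite{bousseyroux3}. I would first test the final formula on a bi-invariant $\vb A$, where $\mathcal R_{2}(0,\cdot)=0$, and on $\vb A=c\vb 1$, to fix the signs. Since Step~1 is only valid at $z$ outside the spectrum, and $u$ is defined through one particular branch, the conclusion can only claim that the formula provides one branch, consistent with the word ``can'' in the statement.
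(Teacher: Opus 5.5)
Your proposal is correct and follows essentially the same route as the paper. Both arguments rest on the same steps: the identity $f_{\vb A}(w)=|w|^2h_{\vb A}(w)-2\Re\bigl(w\,\mathfrak g_{\vb A}(w)\bigr)+1$, the relation $h_{\vb A^{-1}}(z)=|z|^{-2}f_{\vb A}(1/z)$, the substitution $\mathfrak g_{\vb A}(1/z)=-z\,\mathfrak t_{\vb A^{-1}}(z)=-R-gR^2$, and the relation $\partial_\alpha\mathcal R_1(0,g)=|g|^{-2}-1/h(z)$ applied to both $\vb A$ and $\vb A^{-1}$. The only difference is that the paper simply cites this last relation from \cite{bousseyroux1}, whereas you sketch a derivation from Result~\ref{main_result}; that sketch is consistent and gives the same signs.

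The final simplification you flag as the obstacle does go through. Set $P=1+2\Re(gR)$ and $Q=|gR|^2$, so that $|1+gR|^2=P+Q$. The $|g|^{-2}$ terms then cancel and the expression reduces exactly to $|R|^4D/(1-|R|^2PD)$.
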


\section{Examples}

\subsection{Basic random matrix ensembles}

We now introduce the basic random matrix ensembles that will be used throughout the paper to test and illustrate our results.

\begin{definition}[Complex Ginibre ensemble]\label{def:ginibre}
The matrix $\vb{G}$ denotes an $N\times N$ complex Ginibre matrix, i.e.\ a non-Hermitian random matrix with i.i.d.\ complex Gaussian entries of zero mean and variance $1/N$ \cite{Girko1985_CircularLaw_Eng}.
\end{definition}

\begin{definition}[Elliptic Ginibre ensemble]\label{def:elliptic_ginibre}
The elliptic Ginibre ensemble, denoted by $\vb{E}_\tau$, consists of $N\times N$ complex random matrices with Gaussian entries of zero mean and correlations
\begin{equation}
    \E[|X_{ij}|^2] = \frac{1}{N},
    \qquad
    \E[X_{ij}X_{ji}] = \frac{\tau}{N},
\end{equation}
for $1\le i\neq j\le N$, with $\tau\in[-1,1]$ \cite{sommers1988spectrum,girko1986elliptic}.
\end{definition}

In~\cite{bousseyroux1}, it was shown that
\begin{equation}
    \partial_{\alpha}\mathcal{R}_{1,\vb{E}_{\tau}}(0,g) = 1
    \qquad\text{and}\qquad
    \mathcal{R}_{2,\vb{E}_{\tau}}(0,g) = \tau g.
\end{equation}
Using equation~\eqref{defSS}, we see that there are then two $S$-transforms, given by the two determinations of the square root:
\begin{equation}\label{formula_elliptic}
    S_{\vb{E}_{\tau}}(g)
    =
    \pm \frac{1}{\sqrt{\tau g}}.
\end{equation}
Note that, for $\tau=1$, this allows one to define an $S$-transform for a Wigner matrix which was already known in \cite{rao2007multiplication}. When $\tau\to 0$, one approaches a Ginibre matrix, which is bi-invariant, and this is consistent with the convention~\eqref{Sbiinvariant}.

\begin{definition}[Haar unitary ensemble]\label{def:unitary}
The matrix $\vb{U}$ denotes an $N\times N$ random unitary matrix distributed according to the Haar measure on the unitary group $\mathrm{U}(N)$ \cite{mehta2004random}.
\end{definition}

The unitary matrix is bi-invariant, and hence its $S$-transform is $+\infty$.

\begin{definition}[Complex Wishart ensemble]\label{def:complex_wishart}
Let $\vb{A}$ be an $N\times T$ matrix with i.i.d.\ complex Gaussian entries of zero mean and unit variance. The matrix
\begin{equation}
    \vb{W}_q = \frac{1}{T}\vb{A}\vb{A}^*
\end{equation}
is called a complex Wishart matrix with aspect ratio $q=N/T$ \cite{wishart1928generalised}.
\end{definition}

Let us now consider the Wishart case. We saw in~\cite{bousseyroux3} that
\begin{equation}
    \partial_{\alpha}\mathcal{R}_{1,\vb{W}_{q}}(0,g)
    =
    \frac{q}{|1-qg|^2}
    \qquad\text{and}\qquad
    \mathcal{R}_{2,\vb{W}_{q}}(0,g)
    =
    \frac{1}{1-qg}.
\end{equation}
We then obtain a single $S$-transform,
\begin{equation}
    S_{\vb{W}_q}(g)
    =
    \frac{1}{1+qg},
\end{equation}
which is the $S$-transform of a Hermitian Wishart matrix; see~\cite{FirstCourseRMT}.
\begin{definition}[Ensemble 1]
   We consider the ensemble $\alpha I_N + \vb{B}$, where $\vb{B}$ is a bi-invariant matrix and $\alpha\in \C^*$. We set $\vb{M}=\alpha I_N+\vb{B}$.
\end{definition}

By additivity of $\mathcal{R}_1$ and $\mathcal{R}_2$, we have
\begin{equation}
    \partial_{\alpha}\mathcal{R}_{1,\vb{M}}(0,g)
    =
    \partial_{\alpha}\mathcal{R}_{1,\alpha\vb{1}}(0,g)
    +
    \partial_{\alpha}\mathcal{R}_{1,\vb{B}}(0,g)
    =
    r_{+,\vb{B}}^2,
\end{equation}
and
\begin{equation}
    \mathcal{R}_{2,\vb{M}}(0,g)
    =
    \mathcal{R}_{2,\alpha\vb{1}}(0,g)
    +
    \mathcal{R}_{2,\vb{B}}(0,g)
    =
    \alpha.
\end{equation}
Thus, the $S$-transform is simply
\begin{equation}
    S_{\vb{M}}(g)
    =
    \frac{1}{\alpha}.
\end{equation}

Using equation~\eqref{inverseS}, we obtain
\begin{equation}
    S_{\vb{M}^{-1}}(g) = \alpha,
\end{equation}
and therefore
\begin{equation}
    \mathcal{R}_{2, \vb{M}^{-1}}(0, g) = \frac{1}{\alpha}.
\end{equation}

\subsection{A cardioid curve}

\begin{result}
    Let $\vb{B}_1$ and $\vb{B}_2$ be independent and drawn from the same bi-invariant ensemble. Then, the outer boundary of the spectrum of $(I_N + \vb{B}_1)(I_N + \vb{B}_2)$ is given by the set of complex numbers $z \in \C$ such that
\begin{equation}\label{outer}
|1-z|^2
=
r_{+,\vb B}^2\left(1+|z|\right).
\end{equation}
When this makes sense, the circle of radius $r_{-, \vb{B}}^2 - 1$ is the inner spectral boundary.
\end{result}

\begin{remarks}
    \item The case $r_{+, \vb{B}} = 1$ yields the cardioid curve
\begin{equation}\label{eqlemniscate}
|z|^2-2\Re z
=
|z|.
\end{equation}

    \item Here, we do not assume that $\vb{B}$ is a Ginibre matrix. The result is universal in $\vb{B}$ as long as it is bi-invariant. We therefore generalize the cardioid result presented in~\cite{burda2011multiplication}.

    \item Figure~\ref{fig:lemniscate} illustrates this result. The middle and right panels display the same outer boundary, highlighting the universal character of the spectral edges in the bi-invariant case.
\end{remarks}
\begin{proof}
    For the boundaries of type $1$, it is enough to apply Result~\ref{th3} using formula~\eqref{formulaR}. 
    Let us denote $\vb{A}_1 = I_N + \vb{B}_1$ and $\vb{A}_2 = I_N + \vb{B}_2$. For the boundaries of type $2$, one has to be careful since $\tau(\vb{A}_1^{-1}) = 0$. We must therefore apply the bi-invariant case for the boundaries of type $2$ in Result~\ref{th3}. We then obtain a circle of radius
    \begin{equation}
        \frac{1}{\sqrt{\tau\left([\vb{A}_1\vb{A}_1^*]^{-1}\right)
        \tau\left([\vb{A}_2\vb{A}_2^*]^{-1}\right)}}
        =
        \frac{1}{\tau\left([\vb{A}_1\vb{A}_1^*]^{-1}\right)}.
    \end{equation}

    Moreover,
    \begin{equation}
        \tau\left([\vb{A}_1\vb{A}_1^*]^{-1}\right)
        =
        \tau\left([(I_N + \vb{B}_1)(I_N + \vb{B}_1)^*]^{-1}\right)
        =
        \frac{1}{r_{-, \vb{B}}^2 - 1},
    \end{equation}
    using the bi-invariance of $\vb{B}_1$. We then obtain the result.
\end{proof}

\begin{figure}[h!]
    \centering
\includegraphics[width=\textwidth]{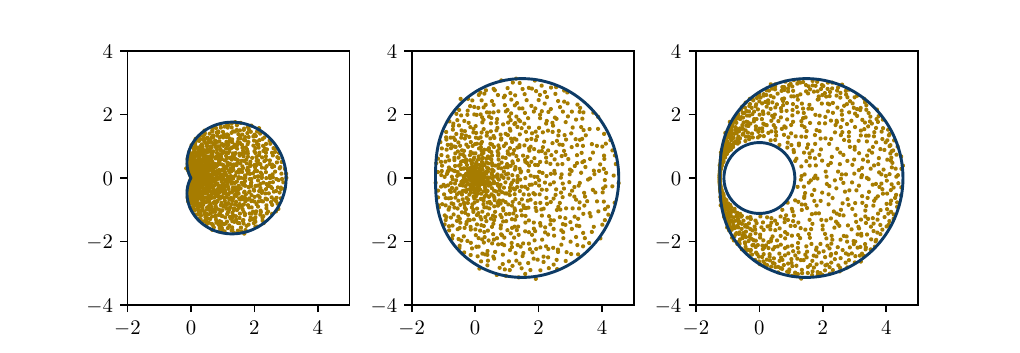}
    \caption{Empirical eigenvalues.
Left: $(I_N + \vb{G}_1)(I_N + \vb{G}_2)$, where $\vb{G}_1$ and $\vb{G}_2$ are Ginibre matrices ($N=1000$); the cardioid curve
\eqref{eqlemniscate} gives the theoretical spectral boundary.
Center: $(I_N + \sigma\vb{G}_1)(I_N + \sigma\vb{G}_2)$ with $\sigma=1.1$, where $\vb{G}_1$ and $\vb{G}_2$ are complex Ginibre matrices
($N=1000$); the outer boundary is given by \eqref{outer} with $r_{+,\vb{B}}=\sigma$.
Right: $(I_N + \sigma\vb{U}_1)(I_N + \sigma\vb{U}_2)$ with $\sigma=1.1$, where $\vb{U}_1$ and $\vb{U}_2$ are random unitary matrices
($N=1000$); in addition to the outer boundary \eqref{outer}, an inner spectral edge appears
at radius $\sigma^2-1$.}
    \label{fig:lemniscate}
\end{figure}

\subsection{A fish}

\begin{result}
Let $\vb{W}_q$ be a complex Wishart matrix defined in Definition \ref{def:complex_wishart} and let $\vb{B}$ be a large bi-invariant matrix such that $r_{-, \vb{B}} = 0$ and $r_{+, \vb{B}} = 1$. The spectral boundary of $\vb{M}$, defined by
\begin{equation}
    \vb{M} = \vb{W}_q(I_N + \sigma \vb{B}),
\end{equation}
is given by the set of points $z \in \C$ satisfying
\begin{equation}\label{theo}
    \sigma^2\,\Im\left(z\frac{z + q - 1 \pm \sqrt{(z + q - 1)^2 - 4qz}}{2q}\right) = - (1-\sigma^2)\Im(z).
\end{equation}
\end{result}

\begin{remarks}
    \item Figure~\ref{fig:aile_avion} illustrates the case where $\vb{B}$ is a complex Ginibre matrix. For small values of $\sigma$, using equation~\eqref{traceMP}, the spectral boundaries are described by the union of the curves $x \mapsto \sigma^2 x^2 \pi \rho(x)$ and $x \mapsto -\sigma^2 \pi x^2 \rho(x)$, where $\rho$ denotes the Marchenko--Pastur density.
\end{remarks}

\begin{proof}
It suffices to use equation \eqref{caseH} together with the expression of $\mathfrak{t}_{\vb{W}_q}$ computed in \cite{potters2020first}:
\begin{equation}
    \mathfrak{t}_{\vb{W}_q} = \frac{z + q - 1 \pm \sqrt{(z + q - 1)^2 - 4qz}}{2q} - 1.
\end{equation}
\end{proof}

\begin{figure}[h!]
    \centering
    \includegraphics{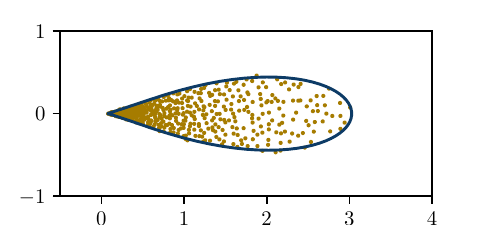}
    \caption{Complex eigenvalues of a matrix of size $N = 500$ given by $\vb{W}_q(I_N + \sigma \vb{G})$, where $\vb{W}_q$ is a complex Wishart matrix with parameter $q = 0.5$, $\sigma = 0.5$, and $\vb{G}$ is a complex Ginibre matrix. The theoretical spectral boundary is also shown using the equation \eqref{theo}.}
    \label{fig:aile_avion}
\end{figure}

\subsection{Product by an elliptic matrix}

We are interested here in the multiplication by an elliptic matrix. We apply result~\ref{propcentre}.

\begin{result}
    Let $\vb{A}$ be a large rotationally invariant random matrix, centered in the sense that $\tau(\vb{A})=0$, and let $\vb{E}_{\tau}$ be an elliptic matrix with parameter $\tau$. Then the spectrum of $\vb{A}\vb{E}_{\tau}$ is a ring and, for all $z\in \C$, one has
    \begin{equation}
        S_{\vb{A}\vb{A}^*}\left(F_{\vb{A}\vb{E}_{\tau}}(|z|)-1\right)
        =
        \frac{F_{\vb{A}\vb{E}_{\tau}}(|z|)}{|z|^2}.
    \end{equation}

    Moreover,
    \begin{equation}\label{formulaoverlap_elliptic}
        \mathcal{O}_{\vb{A}\vb{E}_{\tau}}(z)
        =
        \frac{
            F_{\vb{A}\vb{E}_{\tau}}(|z|)
            \left(1-F_{\vb{A}\vb{E}_{\tau}}(|z|)\right)
        }{|z|^2}.
    \end{equation}
\end{result}

\begin{remarks}
    \item What is remarkable here is that the parameter $\tau$ does not appear. In~\cite{burda2010spectrum}, it was shown that a product of elliptic matrices has the same limiting spectral distribution as a product of Ginibre matrices. We generalize this result.
\end{remarks}

\begin{proof}
    We apply result~\ref{propcentre} to the product $\vb{E}_{\tau}\vb{A}$: the first factor $\vb{E}_{\tau}$ has the same distribution as $-\vb{E}_{\tau}$, while the second factor $\vb{A}$ is rotationally invariant and centered. The matrices $\vb{A}\vb{E}_{\tau}$ and $\vb{E}_{\tau}\vb{A}$ have the same eigenvalues. Moreover, it is known that the singular values of an elliptic matrix with parameter $\tau$ do not depend on $\tau$; see for instance~\cite{bousseyroux1}. Thus, $S_{\vb{E}_{\tau}\vb{E}_{\tau}^*}$ is simply the $S$-transform of a Wishart matrix with parameter $q=1$.
\end{proof}

We now want to apply Result~\ref{result_boundaries}.

\begin{result}\label{propbonhomme}
    Let
    \begin{equation}
        \vb{D}
        =
        \diag(1,\ldots,1,-1,\ldots,-1)
    \end{equation}
    be a diagonal matrix with half of its entries equal to $1$ and the other half equal to $-1$. Let $\vb{E}_{\tau}$ be an elliptic matrix with parameter $\tau$, and set
    \begin{equation}
        \vb{M}
        =
        (\sigma I_N+\vb{D})\vb{E}_{\tau}.
    \end{equation}
    Then the spectral boundaries of $\vb{M}$ are given by the following procedure. First find the points $x\in \C$ satisfying
 \begin{equation}
    |\tau|
    \left|    
        \frac{\sigma+1}{x-(1+\sigma)}
        +
        \frac{\sigma-1}{x-(\sigma-1)}
    \right|
    =
    \frac{|\sigma+1|^2}{|x-(1+\sigma)|^2}
    +
    \frac{|\sigma-1|^2}{|x-(\sigma-1)|^2}.
\end{equation}
    The boundary is then obtained as the image of this curve under the map
    \begin{equation}
        x
        \mapsto
        x
        \sqrt{
        \tau
        \frac{1}{2}
        \left(
        \frac{\sigma+1}{x-(1+\sigma)}
        +
        \frac{\sigma-1}{x-(\sigma-1)}
        \right)
        }.
    \end{equation}
\end{result}

\begin{remarks}
    \item Figure~\ref{fig:bonhomme} illustrates the result for $\tau=0.9$ and $\sigma=0.5$.
\end{remarks}

\begin{proof}
    It is enough to apply Result~\ref{result_boundaries}, using the facts that
    \begin{equation}
        S_{\vb{E}_{\tau}}(t)
        =
        \pm \frac{1}{\sqrt{\tau t}},
    \end{equation}
    and
    \begin{equation}
        \partial_{\alpha}
        \mathcal{R}_{1,\vb{E}_{\tau}}(0,g)
        =
        1.
    \end{equation}
\end{proof}

\begin{figure}
    \centering
    \includegraphics[width=0.5\linewidth]{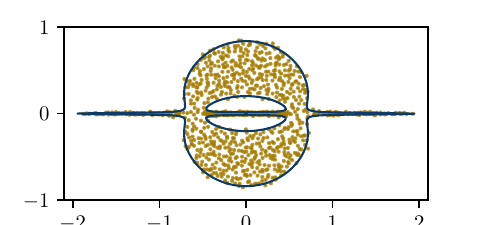}
    \caption{Empirical complex eigenvalues of $(\sigma I_N+\vb{D})\vb{E}_{\tau}$, where $\vb{D}
=
\diag(1,\ldots,1,-1,\ldots,-1)$ and $\vb{E}_{\tau}$ is an elliptic matrix with $\tau=0.9$ and $\sigma=0.5$. The theoretical boundary is drawn using result~\ref{propbonhomme}.}
    \label{fig:bonhomme}
\end{figure}

\subsection{The inverse Wigner matrix}

This section is exploratory. Let us return to the inverse of a Wigner matrix $\vb{W}$. In the case $\tau=1$, Eq.~\eqref{formula_elliptic} gives
\begin{equation}
    S_{\vb{W}}(g)=\pm \frac{1}{\sqrt{g}}.
\end{equation}
Using the inverse relation~\eqref{inverseS}, we are led to the following branches for the $S$-transform of $\vb{W}^{-1}$:
\begin{equation}
    S_{\vb{W}^{-1}}(t)=\pm \sqrt{-t-1}.
\end{equation}

In particular, as $t\to 0$, these branches converge to $\pm i$. At first sight, this may seem paradoxical. Indeed, in the Hermitian setting, one usually has, for a matrix $\vb{A}$ with a well-defined first moment,
\begin{equation}
    S_{\vb{A}}(x)
    \underset{x\to 0}{\longrightarrow}
    \frac{1}{\tau(\vb{A})},
\end{equation}
see~\cite{FirstCourseRMT}. A naive application of this formula would suggest
\begin{equation}
    \tau(\vb{W}^{-1})=\pm i,
\end{equation}
which cannot be interpreted as an ordinary trace moment: the value is not unique, it is purely imaginary although $\vb{W}^{-1}$ is Hermitian, and the integral defining the inverse moment is singular at the origin.

One possible interpretation is that these quantities are regularized. Indeed, in the large-$N$ limit, the spectrum of $\vb{W}$ is described by the semicircle law
\begin{equation}
    \rho(\lambda)
    =
    \frac{1}{2\pi}\sqrt{4-\lambda^2}\,
    \mathbf{1}_{[-2,2]}(\lambda).
\end{equation}
The ordinary inverse moment
\begin{equation}
    \int_{-2}^{2}\frac{\rho(\lambda)}{\lambda}\,d\lambda
\end{equation}
is not canonically defined. If instead one regularizes the pole by moving it away from the real axis, one obtains
\begin{equation}
    M_\varepsilon^{\pm}
    =
    \int_{-2}^{2}
    \frac{\rho(\lambda)}{\lambda\mp i\varepsilon}\,d\lambda,
\end{equation}
and therefore
\begin{equation}
    M_\varepsilon^{\pm}
    \underset{\varepsilon\to 0^+}{\longrightarrow}
    \pm i.
\end{equation}

The relation~\eqref{relationRS} gives
\begin{equation}
    R(g)
    =
    \frac{1}{S_{\vb{W}^{-1}}(gR(g))},
\end{equation}
and hence
\begin{equation}
    gR(g)^3+R(g)^2+1=0.
\end{equation}
The three solutions are given by
\begin{equation}
R_k(g)
=
-\frac{1}{3g}
+
\omega^k
\sqrt[3]{
-\frac{27g^2+2}{54g^3}
+
\sqrt{\frac{27g^2+4}{108g^4}}
}
+
\omega^{-k}
\sqrt[3]{
-\frac{27g^2+2}{54g^3}
-
\sqrt{\frac{27g^2+4}{108g^4}}
},
\qquad k=0,1,2,
\end{equation}
where
\begin{equation}
    \omega=e^{2\pi i/3}.
\end{equation}
The two branches which remain finite at $g=0$ satisfy
\begin{equation}
    R_{\pm}(g)
    =
    \pm i
    +
    \frac{1}{2}g
    +
    O(g^2).
\end{equation}

Therefore, the quantities obtained from this expansion should not be read as ordinary moments of $\vb{W}^{-1}$, but rather as regularized free cumulants. As for the usual $R$-transform, one may expect these regularized free cumulants to remain additive.

This is surprising because the usual transform of $\vb{W}^{-1}$ cannot be defined through the standard moment expansion. Indeed, the eigenvalues of $\vb{W}^{-1}$ are unbounded in the large-$N$ limit, since the spectrum of $\vb{W}$ reaches the origin. Thus the usual route, based on ordinary inverse moments and on an expansion at infinity, breaks down. What remains to be understood is the precise meaning of this $R$-transform, which cannot be obtained from the standard Hermitian construction, but which nevertheless appears naturally through the analytic continuation of the non-Hermitian $S$-transform formalism.

\newpage
\paragraph*{Acknowledgements.} 

This research was conducted within the Econophysics \& Complex Systems Research Chair, under the aegis of the Fondation du Risque, the Fondation de l’Ecole polytechnique, the Ecole polytechnique, and Capital Fund Management.

\bibliographystyle{plain}
\bibliography{References.bib}

\appendix

\newpage

\section{APPENDICES}
\subsection{Reminder on the replica method for random matrices}
\label{sec:replica}
Let $\vb{C}$ be an $N\times N$ Hermitian matrix with strictly positive real eigenvalues. The standard complex Gaussian identities yield
\begin{equation}
    \frac{1}{\pi^N}
    \int_{\C^N}
    \phi_a\,\overline{\phi_b}\,
    e^{-\langle \phi,\vb{C}\phi\rangle}\,
    \mathrm{d}\phi
    \;=\;
    \frac{[\vb{C}^{-1}]_{ab}}{\det(\vb{C})},
    \qquad 1\leq a,b\leq N,
\end{equation}
where $d\phi$ denotes the Lebesgue measure on $\C^N$.
As a consequence,
\begin{equation}
    [\vb{C}^{-1}]_{ab}
    \;=\;
    \frac{\displaystyle
    \int_{\C^N}
    \phi_a\,\overline{\phi_b}\,
    e^{-\langle \phi,\vb{C}\phi\rangle}\,
    d\phi}
    {\displaystyle
    \int_{\C^N}
    e^{-\langle \phi,\vb{C}\phi\rangle}\,
    \mathrm{d}\phi}.
\end{equation}

To introduce replicas, we multiply both the numerator and the denominator by
\(
\bigl(\pi^{-N}\!\int_{\C^N} e^{-\langle \phi,\vb{C}\phi\rangle}\,\mathrm{d}\phi\bigr)^{n-1},
\)
and introduce $n$ independent copies
$\phi^{(1)},\dots,\phi^{(n)}\in\C^N$.
This gives
\begin{equation}
    [\vb{C}^{-1}]_{ab}
    \;=\;
    \det(\vb{C})^{\,n}
    \int_{(\C^N)^n}
    [\phi^{(1)}]_a\,\overline{[\phi^{(1)}]_b}\,
    \exp\!\Bigl(
        -\sum_{k=1}^n
        \langle \phi^{(k)},\vb{C}\phi^{(k)}\rangle
    \Bigr)\,
    \frac{\mathrm{d}\phi^{(1)}\cdots \mathrm{d}\phi^{(n)}}{\pi^{Nn}}.
\end{equation}
This identity holds for all integers $n\ge 1$.
The replica method then consists in formally taking the limit $n\to 0$,
using $\det(\vb{C})^{\,n}\to 1$, which leads to
\begin{equation}\label{eq:replica}
    [\vb{C}^{-1}]_{ab}
    \;=\;
    \lim_{n\to 0}
    \int_{(\C^N)^n}
    [\phi^{(1)}]_a\,\overline{[\phi^{(1)}]_b}\,
    \exp\!\Bigl(
        -\sum_{k=1}^n
        \langle \phi^{(k)},\vb{C}\phi^{(k)}\rangle
    \Bigr)\,
    \frac{\mathrm{d}\phi^{(1)}\cdots \mathrm{d}\phi^{(n)}}{\pi^{Nn}}.
\end{equation}

Formula~\eqref{eq:replica} provides a convenient representation for resolvent entries in Hermitian settings.
It is particularly well suited for saddle-point analysis and behaves naturally under sums of independent matrices.

We now consider a $2N\times 2N$ Hermitian matrix $\vb{C}$ with block structure
\begin{equation}
    \vb{C} \;=\;
    \begin{pmatrix}
        \vb{C}_{11} & \vb{C}_{12} \\
        \vb{C}_{12}^* & \vb{C}_{22}
    \end{pmatrix},
    \qquad
    \vb{C}_{11},\vb{C}_{22}, \vb{C}_{12}\in\C^{N\times N}.
\end{equation}
Writing the inverse in block form,
\begin{equation}
    \vb{C}^{-1} \;=\;
    \begin{pmatrix}
        \vb{G}_{11} & \vb{G}_{12} \\
        \vb{G}_{21}^* & \vb{G}_{22}
    \end{pmatrix},
\end{equation}
the replica representation yields, for $i,j\in\{1,2\}$,
\begin{multline}\label{eq:block-replica}
    \vb{G}_{ij}
    \;=\;
    \lim_{n\to 0}
    \int_{\phi_1^{(k)},\,\phi_2^{(k)}\in\C^N}
    \phi_i^{(1)}\,\left(\phi_j^{(1)}\right)^*
    \\
    \times
    \exp\!\Biggl(
        -\sum_{k=1}^n\Bigl[
            \langle \phi_1^{(k)},\vb{C}_{11}\phi_1^{(k)}\rangle
            + \langle \phi_2^{(k)},\vb{C}_{22}\phi_2^{(k)}\rangle
            + 2\,\Re\!\bigl\langle \phi_1^{(k)},\vb{C}_{12}\phi_2^{(k)}\bigr\rangle
        \Bigr]
    \Biggr)
    \,
    \frac{\mathrm{d}\phi_1^{(1)}\!\cdots \mathrm{d}\phi_1^{(n)}\,
          \mathrm{d}\phi_2^{(1)}\!\cdots \mathrm{d}\phi_2^{(n)}}{\pi^{2Nn}}.
\end{multline}
We now specialize to the matrix
\begin{equation}
    \vb{C} =
    \begin{pmatrix}
        \omega\,\vb{I}_N & z\,\vb{I}_N - \vb{M} \\
        (z\,\vb{I}_N - \vb{M})^* & \omega\,\vb{I}_N
    \end{pmatrix},
\end{equation}
where $z\in\C$, $\vb{M}$ is a non-Hermitian $N\times N$ matrix, and $\omega>0$. We assume that $\omega$ is large, so that $\vb{C}$ is a Hermitian matrix with strictly positive eigenvalues. In this case, the representation~\eqref{eq:block-replica} becomes
\begin{multline}\label{bigformula}
    \vb{G}_{ij}(\omega,z)
    \;=\;
    \lim_{n\to 0}
    \int_{\phi_1^{(k)},\,\phi_2^{(k)}\in\C^N}
    \phi_i^{(1)}\,\left(\phi_j^{(1)}\right)^*
    \\
    \times
    \exp\!\Biggl(
        -\sum_{k=1}^n\Bigl[
            \omega \|\phi_1^{(k)}\|^2
            + \omega \|\phi_2^{(k)}\|^2
            + 2\,\Re\!\bigl\langle \phi_1^{(k)},(z-\vb{M})\phi_2^{(k)}\bigr\rangle
        \Bigr]
    \Biggr)
    \,
    \frac{\mathrm{d}\phi_1^{(1)}\!\cdots \mathrm{d}\phi_1^{(n)}\,
          \mathrm{d}\phi_2^{(1)}\!\cdots \mathrm{d}\phi_2^{(n)}}{\pi^{2Nn}}.
\end{multline}

\subsection{Derivation of the main result \ref{mainresult}}\label{sec:derivation}

The proof will be very close to the paper \cite{bousseyroux3}. To simplify notation, we work with a single replica $n=1$ and absorb the overall exponential factor at the end.

Given an $N\times N$ matrix $\vb{M}$ and spectral parameter $z\in\C$, embed $\vb{M}$ into the $2N\times2N$ block-Hermitian matrix
\begin{equation}
\vb{H}(z)\;:=\;
\begin{pmatrix}
0 & z\vb{1}-\vb{M}\\
\overline{z}\vb{1}-\vb{M}^* & 0
\end{pmatrix}.
\end{equation}
For  $\omega\in\C$, define the block resolvent
\begin{equation}
\vb{G}_{\vb{M}}(\omega,z)\;:=\;\bigl[\omega \vb{1}-\vb{H}(z)\bigr]^{-1}
\;=\;
\begin{pmatrix}
\vb{G}_{11} & \vb{G}_{12}\\
\vb{G}_{21} & \vb{G}_{22}
\end{pmatrix},
\end{equation}
and the block-trace scalars
\begin{equation}
\g_{ij}^{N}(\omega,z)\;:=\;\frac{1}{N}\Tr\,\vb{G}_{ij}(\omega,z),\qquad i, j\in\{1,2\}.
\end{equation}
The \emph{quaternionic} (matrix-valued) resolvent is then defined as
\begin{equation}\label{eq:quaternionic-resolvent}
\mathcal{G}_{\vb{M}}^{N}(\omega,z)
\;:=\;
\begin{pmatrix}
\g_{11}^{N}(\omega,z) & \g_{12}^{N}(\omega,z)\\
\g_{21}^{N}(\omega,z) & \g_{22}^{N}(\omega,z)
\end{pmatrix}.
\end{equation}
Its large-$N$ limit is given by
\begin{equation}\label{defG}
    \mathcal{G}_{\vb{M}}(\omega,z)
    \;=\;
    \lim_{N \to \infty} \mathcal{G}_{\vb{M}}^{N}(\omega,z)
    \;=\;
    \begin{pmatrix}
        \g_1(\omega, z) & \g_2(\omega, z) \\
        \overline{\g_2}(\omega, z) & \g_1(\omega, z)
    \end{pmatrix},
\end{equation}
where
\begin{align}
\g_1(\omega, z)
&=\;
\tau\!\left[\omega\bigl(\omega^2 \vb{1}-(z \vb{1}-\vb{M})(z \vb{1}-\vb{M})^*\bigr)^{-1}\right],
\\[0.5em]
\g_2(\omega, z)
&=\;
-\,\tau\!\left[\bigl(\omega^2 \vb{1}-(z \vb{1}-\vb{M})(z \vb{1}-\vb{M})^*\bigr)^{-1}(z \vb{1}-\vb{M})^*\right].
\end{align}
and $\tau := \lim_{N \to \infty} \frac{1}{N}\,\mathrm{Tr}$ denotes the normalized trace in the large-$N$ limit. 

We apply formula~\eqref{bigformula} in the case $\vb{M} = \vb{A}\vb{U}\vb{B}\vb{U}^*$:
\begin{multline}
    [\mathcal{G}_{\vb{M}}(\omega, z)]_{i j} 
    \;\propto\;
    \int_{\phi_1,\,\phi_2 \in \C^N} 
        \phi_i \,\phi_j^* \,
        \exp\!\left\{-\omega \|\phi_1\|^2 \;-\; \omega \|\phi_2\|^2 \;-\; 2\,\Re\!\bigl(z\,\langle \phi_1, \phi_2\rangle\bigr)\right\} \\
    \times 
        \exp\!\left\{2\,\Re\!\bigl(\langle \phi_1, \vb{A}\vb{U}\vb{B}\vb{U}^*\,\phi_2\rangle\bigr)\right\}
    \,\mathrm{d}\phi_1 \,\mathrm{d}\phi_2 .
\end{multline}
Now take expectation with respect to $\vb{U}$ using the definition of $\mathcal{H}_{\vb{B}}$, the main object of \cite{bousseyroux1}.
\begin{multline}
    \E_{\vb{U}}\!\bigl([\vb{G}_{\vb{M}}(\omega, z)]_{i j}\bigr) 
    \;\propto\;
    \int_{\phi_1,\,\phi_2 \in \C^N} 
        \phi_i \,\phi_j^* \,
        \exp\!\left\{-\omega \|\phi_1\|^2 \;-\; \omega \|\phi_2\|^2 \;-\; 2\,\Re\!\bigl(z\,\langle \phi_1, \phi_2 \rangle\bigr)\right\} \\
    \times 
        \exp\!\left\{2N \,\mathcal{H}_{\vb{B}}\!\Bigl(
            \tfrac{\|\vb{A}^*\phi_1\|}{\sqrt{N}}\;\tfrac{\|\phi_2\|}{\sqrt{N}},\;
            \tfrac{\langle \phi_1, \vb{A}\phi_2\rangle}{N}
        \Bigr)\right\}
    \,\mathrm{d}\phi_1 \,\mathrm{d}\phi_2 .
\end{multline}
Introduce the Lagrange multipliers:
\begin{equation}
    C \;=\; \frac{\|\vb{A}^*\phi_1\|^2}{N}, 
    \qquad 
    D \;=\; \frac{\|\phi_2\|^2}{N}, 
    \qquad 
    E \;=\; \frac{\langle \phi_1, \vb{A}\phi_2\rangle}{N},
\end{equation}
and recall the representation of the Dirac delta as an integral over the imaginary axis:
\begin{equation}
    \delta(x)
    \;=\;
    \int_{-i\infty}^{i\infty} \frac{e^{-x y}}{2\pi i}\,dy.
\end{equation}
Then
\begin{multline}
    \E_{\vb{U}}\!\bigl([\vb{G}_{\vb{M}}(\omega, z)]_{i j}\bigr) 
    \;\propto\;
    \int 
    \int_{\phi_1,\,\phi_2 \in \C^N} 
        \phi_i \,\phi_j^* \,
        \exp\!\left\{-\omega \|\phi_1\|^2 \;-\; \omega \|\phi_2\|^2 \;-\; 2\,\Re\!\bigl(z\,\langle \phi_1, \phi_2\rangle\bigr)\right\} \\
    \times 
        \exp\!\left\{2N \,\mathcal{H}_{\vb{B}}(\sqrt{C D},\,E)\right\} \\
    \times 
        \exp\!\left\{
            c\Bigl(\tfrac{\langle\phi_1, \vb{A}\vb{A}^* \phi_1\rangle}{N} - C\Bigr)
            \;+\;
            d\Bigl(\tfrac{\|\phi_2\|^2}{N} - D\Bigr)
            \;+\;
            \Re\!\Bigl[\overline{e}\,\bigl(\tfrac{\langle \phi_1, \vb{A}\phi_2\rangle}{N} - E\bigr)\Bigr]
        \right\} \\
    \mathrm{d}\phi_1 \,\mathrm{d}\phi_2 \,
    \mathrm{d}c \,\mathrm{d}d \,\mathrm{d}\Re(e) \,\mathrm{d}\Im(e)\,\mathrm{d}C \,\mathrm{d}D \,\mathrm{d}\Re(E) \,\mathrm{d}\Im(E).
\end{multline}

Make the change of variables $c \mapsto cN$, $d \mapsto dN$, $e \mapsto 2Ne$:
\begin{multline}\label{ap}
    \E\!\bigl([\vb{G}_{\vb{M}}(\omega, z)]_{i j}\bigr) 
    \;\propto\;
    \int 
    \int_{\phi_1,\,\phi_2 \in \C^N} 
        \phi_i \,\phi_j^* \,
        \exp\!\left\{-\langle\phi_1, (\omega \vb{1} - c \vb{A}\vb{A}^*)\phi_1\rangle \;-\; (\omega - d)\|\phi_2\|^2 \;-\; 2\,\Re\langle\phi_1,(z\vb{1}-\overline{e}\vb{A})\phi_2\rangle\right\} \\
    \times 
        \exp\!\left\{2N \,\mathcal{H}_{\vb{B}}(\sqrt{C D},\,E)\right\}
        \exp\!\left\{-N\,c\,C \;-\; N\,d\,D \;-\; 2N\,\Re\!\bigl[e\,E\bigr]\right\}
    \\\,\mathrm{d}\phi_1 \,\mathrm{d}\phi_2 \,
    \mathrm{d}c \,\mathrm{d}d \,\mathrm{d}\Re(e) \,\mathrm{d}\Im(e)\,\mathrm{d}C \,\mathrm{d}D \,\mathrm{d}\Re(E) \,\mathrm{d}\Im(E).
\end{multline}
We now perform the Gaussian integrals over $\phi_1$ and $\phi_2$. 
Equation~\eqref{ap} then yields
\begin{equation}\label{ap2}
    \E_{\vb{U}}\!\bigl([\vb{G}_{\vb{M}}(\omega, z)]_{i j}\bigr)
    \;\approx\footnote{The symbol $\approx$ is inherently imprecise due to the use of the replica method. The result we aim to explain assumes that $\approx$ preserves the block operator $\mathcal{G}$ defined in Eq.~\eqref{defG}.}
    [\vb{H}_{\vb{A}}(\omega, z)]_{i, j},
\end{equation}
where
\begin{equation}
    \vb{H}_{\vb{A}}(\omega, z)
    =
    \begin{pmatrix}
        \omega \vb{1} - c \vb{A}\vb{A}^* & z - \overline{e}\vb{A} \\
        (z - \overline{e}\vb{A})^* & \omega - d
    \end{pmatrix}^{-1}.
\end{equation}
The parameters $c,d,e,C,D,E$ are chosen so as to optimize
\begin{equation}\label{eqaderivee}
    2\,\mathcal{H}_{\vb{B}}(\sqrt{C D},\,E)
    -
    c\,C
    -
    d\,D
    -
    \overline{e}\,E
    -
    e\,\overline{E}
    +
    \log\!\det \vb{H}_{\vb{A}}(\omega,z).
\end{equation}
Taking derivatives of \eqref{eqaderivee} with respect to $C,D,E$ gives
\begin{equation}\label{system1}
    \left\{
    \begin{aligned}
        \sqrt{\frac{D}{C}}\,
        \partial_{\alpha}\mathcal{H}_{\vb{B}}(\sqrt{CD},\,E)
        &= c, \\[0.4em]
        \sqrt{\frac{C}{D}}\,
        \partial_{\alpha}\mathcal{H}_{\vb{B}}(\sqrt{CD},\,E)
        &= d, \\[0.4em]
        2\,\partial_{\beta}\mathcal{H}_{\vb{B}}(\sqrt{CD},\,E)
        &= \overline{e}.
    \end{aligned}
    \right.
\end{equation}

Using the identity
\begin{equation}
    \partial_t \log\!\det\!\bigl(\vb{C}(t)\bigr)
    =
    \Tr\!\bigl(\vb{C}(t)^{-1}\vb{C}'(t)\bigr),
\end{equation}
the derivatives of \eqref{eqaderivee} with respect to $c,d,\overline{e}$ yield
\begin{equation}\label{system2}
    \left\{
    \begin{aligned}
        D
        &=
        \tau\!\left[
            \bigl(\omega - c\vb{A}\vb{A}^*\bigr)
            \Bigl(
                \bigl(\omega - c\vb{A}\vb{A}^*\bigr)(\omega-d)
                -
                |z-\overline{e}\vb{A}|^2
            \Bigr)^{-1}
        \right],
        \\[0.6em]
        E
        &=
        -\tau\!\left[
            \vb{A}(z-\overline{e}\vb{A})^*
            \Bigl(
                \bigl(\omega - c\vb{A}\vb{A}^*\bigr)(\omega-d)
                -
                |z-\overline{e}\vb{A}|^2
            \Bigr)^{-1}
        \right],
        \\[0.6em]
        C
        &=
        \tau\!\left[
            \vb{A}\vb{A}^*(\omega-d)
            \Bigl(
                \bigl(\omega - c\vb{A}\vb{A}^*\bigr)(\omega-d)
                -
                |z-\overline{e}\vb{A}|^2
            \Bigr)^{-1}
        \right]
    \end{aligned}
    \right.
\end{equation} where $|X|^2$ denotes $XX^*$.

Finally, to summarize, using~\eqref{ap2}, we obtain
\begin{equation}
    \g_{1,\vb{M}}(\omega,z)
    =
    \tau\!\left[
        (\omega-d)
        \Bigl(
            \bigl(\omega - c\vb{A}\vb{A}^*\bigr)(\omega-d)
            -
            |z-\overline{e}\vb{A}|^2
        \Bigr)^{-1}
    \right],
\end{equation}
\begin{equation}
    \g_{1,\vb{M}}(\omega,z)
    =
    \tau\!\left[
        \bigl(\omega - c\vb{A}\vb{A}^*\bigr)
        \Bigl(
            \bigl(\omega - c\vb{A}\vb{A}^*\bigr)(\omega-d)
            -
            |z-\overline{e}\vb{A}|^2
        \Bigr)^{-1}
    \right].
\end{equation}
\begin{equation}
    \g_{2,\vb{M}}(\omega,z)
    =
   -\tau\!\left[
        (z-\overline{e}\vb{A})^*
        \Bigl(
            \bigl(\omega - c\vb{A}\vb{A}^*\bigr)(\omega-d)
            -
            |z-\overline{e}\vb{A}|^2
        \Bigr)^{-1}
    \right].
\end{equation} where $c,d,e,C,D,E$ are defined through the coupled systems
\eqref{system1} and \eqref{system2}.

These equations are valid for sufficiently large $\omega$. We then perform
an analytic continuation to $\omega \to 0$, taking care that the continuation
may involve a change of branch. As recalled in~\cite{bousseyroux1},
by taking $\omega = -i\epsilon$ with $\epsilon>0$ and $\epsilon \to 0$, we have
\begin{equation}
       \g_{1,\vb{M}}(\omega,z) \longrightarrow i\,\oo_{\vb{M}}(z), 
\end{equation} where $\oo_{\vb{M}}(z)$, defined in~\cite{bousseyroux1},
measures the non-normality of the eigenvectors. 

In the limit $\omega \to 0$, the previous equations become
\begin{equation}
    i\,\oo(z)
    =
    -d\,\tau\!\left[
        \Bigl(
            cd\,\vb{A}\vb{A}^*
            -
            |z-\overline{e}\vb{A}|^2
        \Bigr)^{-1}
    \right].
\end{equation}
\begin{equation}
    i\,\oo(z)
    =-c
    \tau\!\left[
        \vb{A}\vb{A}^*
        \Bigl(
            cd\,\vb{A}\vb{A}^*
            -
            |z-\overline{e}\vb{A}|^2
        \Bigr)^{-1}
    \right].
\end{equation}
\begin{equation}
    \g(z)
    =
    -\tau\!\left[
        (z-\overline{e}\vb{A})^*
        \Bigl(
            cd\,\vb{A}\vb{A}^*
            -
            |z-\overline{e}\vb{A}|^2
        \Bigr)^{-1}
    \right].
\end{equation}
where $c,d,e,C,D,E$ are defined through the coupled systems \eqref{system1} and

\begin{equation}
    \left\{
    \begin{aligned}
        D
        &=
        \tau\!\left[
            -c\vb{A}\vb{A}^*
            \Bigl(
                cd\,\vb{A}\vb{A}^*
                -
                |z-\overline{e}\vb{A}|^2
            \Bigr)^{-1}
        \right],
        \\[0.6em]
        E
        &=
       - \tau\!\left[
            \vb{A}(z-\overline{e}\vb{A})^*
            \Bigl(
                cd\,\vb{A}\vb{A}^*
                -
                |z-\overline{e}\vb{A}|^2
            \Bigr)^{-1}
        \right],
        \\[0.6em]
        C
        &=
        \tau\!\left[
            -d\,\vb{A}\vb{A}^*
            \Bigl(
                cd\,\vb{A}\vb{A}^*
                -
                |z-\overline{e}\vb{A}|^2
            \Bigr)^{-1}
        \right].
    \end{aligned}
    \right.
\end{equation}
We introduce the variables $\gamma = \sqrt{cd}$ and $G = \sqrt{CD}$, and then obtain the main result \ref{mainresult}.

\subsection{Derivation of the result \ref{result_boundaries}}
\label{second_result}

Let $z'$ be a point on the spectral boundary of $\vb{A}\vb{B}$. The idea is again to study the limit of the previous equations as $z\to z'$, knowing that
\begin{equation}
    \mathfrak{o}(z)\longrightarrow 0
\end{equation}
in this limit. Since the left-hand side of the first equation of the system ~\eqref{system_zero} tends to zero, we must have
\begin{equation}
    G\longrightarrow 0.
\end{equation}
There are then two possible regimes:
\begin{equation}
    \mathcal{R}_{1,\vb{B}}(G,E)\longrightarrow 0
    \qquad
    \text{or}
    \qquad
    \mathcal{R}_{1, \vb{B}}(G,E)\longrightarrow +\infty.
\end{equation}
We first consider the case
\begin{equation}
    \mathcal{R}_{1, \vb{B}}(G,E)\longrightarrow 0.
\end{equation}
Taking the limit $G\to0$ in~\eqref{system_zero}, and keeping the first non-vanishing order in the first equation, we obtain
\begin{equation}\label{boundary_system_zero}
    \left\{
    \begin{aligned}
        1
        &=
        \tau\left[
            \partial_{\alpha}\mathcal{R}_{1, \vb{B}}(0,E)\,
            \vb{A}^*\vb{A}
            \left(
                \left|z-\mathcal{R}_{2, \vb{B}}(0,E)\vb{A}\right|^2
            \right)^{-1}
        \right],
        \\[0.6em]
        E
        &=
        \tau\left[
            \vb{A}
            \left(
                z-\mathcal{R}_{2, \vb{B}}(0,E)\vb{A}
            \right)^{-1}
        \right].
    \end{aligned}
    \right.
\end{equation}

We now set
\begin{equation}
    x
    =
    \frac{z}{\mathcal{R}_{2,\vb{B}}(0,E)}.
\end{equation}
Then~\eqref{boundary_system_zero} becomes
\begin{equation}
    \left\{
    \begin{aligned}
        \frac{1}{
            \tau\left[
                \vb{A}^*\vb{A}
                \left(
                    |x\vb{1}-\vb{A}|^2
                \right)^{-1}
            \right]
        }
        &=
        \frac{
            \partial_{\alpha}\mathcal{R}_{1, \vb{B}}(0,E)
        }{
            \left|\mathcal{R}_{2, \vb{B}}(0,E)\right|^2
        },
        \\[0.8em]
        E\mathcal{R}_{2, \vb{B}}(0,E)
        &=
        \mathfrak{t}_{\vb{A}}(x).
    \end{aligned}
    \right.
\end{equation}
Using the definition of the associated $S$-transform, this gives
\begin{equation}
    \mathcal{R}_{2, \vb{B}}(0,E)
    =
    \frac{1}{
        S_{\vb{B}}\left(\mathfrak{t}_{\vb{A}}(x)\right)
    }.
\end{equation} which gives the result.

\paragraph{Second case.}

Consider now the case where $|\mathcal{R}_{1, \vb{B}}(0,E)| \to +\infty$. From the second equation of \eqref{system_zero}, this forces us to get $E\to 0$.
One then realizes that what ultimately matters are only the quantities $\tau(\vb{A}^{-1})$ and $\tau([\vb{A}\vb{A}^*]^{-1})$. Thus, to determine the spectral boundary, it suffices to replace $\vb{A}^{-1}$ by, for instance, a matrix of the form $\tau(\vb{A}^{-1}) + \sqrt{\tau([\vb{A}\vb{A}^*]^{-1}) - |\tau(\vb{A}^{-1})|^2}\,\vb{G}$ where $\vb{G}$ is a Ginibre matrix defined in Definition \ref{def:ginibre}. We can then use the type $1$ equations given by \eqref{biinvariantcase}, by considering the inverse matrix.

\subsection{Proof of Lemma \ref{propinverse}}

Let us finally prove the Lemma \ref{propinverse}. We first recall the notation. For a large random matrix $\vb{A}$, we set
\begin{equation}
    \g_{\vb A}(z)
    =
    \tau\bigl((z\vb 1-\vb A)^{-1}\bigr),
\end{equation}
and
\begin{equation}
    \mathfrak t_{\vb A}(z)
    =
    \tau\bigl(\vb A(z\vb 1-\vb A)^{-1}\bigr).
\end{equation}
We also introduce the two functions
\begin{equation}
    h_{\vb A}(z)
    =
    \tau\left(|z\vb 1-\vb A|^{-2}\right),
\end{equation}
and
\begin{equation}
    f_{\vb A}(z)
    =
    \tau\left(
    \bigl[(z\vb 1-\vb A)^{-1}\vb A\bigr]^*
    \bigl[(z\vb 1-\vb A)^{-1}\vb A\bigr]
    \right).
\end{equation}
Finally, we shall use the relation, proved in~\cite{bousseyroux1},
\begin{equation}\label{relation_bousseyroux1}
    \partial_\alpha
    \mathcal R_{1,\vb A}(0,\g_{\vb A}(z))
    =
    \frac{1}{|\g_{\vb A}(z)|^2}
    -
    \frac{1}{h_{\vb A}(z)}.
\end{equation}

\begin{lemma}
    Let $\vb A$ be a large random matrix. The expression
    \begin{equation}
        \partial_\alpha
        \mathcal R_{1,\vb A^{-1}}(0,g)
        =
        \frac{
            |\mathcal R_{2,\vb A^{-1}}(0,g)|^4
            \,
            \partial_\alpha
            \mathcal R_{1,\vb A}(0,u)
        }
        {
            1
            -
            |\mathcal R_{2,\vb A^{-1}}(0,g)|^2
            \left(
                1
                +
                2\Re\bigl(g\mathcal R_{2,\vb A^{-1}}(0,g)\bigr)
            \right)
            \partial_\alpha
            \mathcal R_{1,\vb A}(0,u)
        },
    \end{equation}
    where
    \begin{equation}
        u
        =
        -\mathcal R_{2,\vb A^{-1}}(0,g)
        -
        g\mathcal R_{2,\vb A^{-1}}(0,g)^2,
    \end{equation}
    can provide one branch of
    \begin{equation}
        g\mapsto
        \partial_\alpha\mathcal R_{1,\vb A^{-1}}(0,g),
    \end{equation}
    where $\mathcal R_{2,\vb A^{-1}}$ and $\mathcal R_{1,\vb A}$ denote suitable branches associated respectively with $\vb A^{-1}$ and $\vb A$.
\end{lemma}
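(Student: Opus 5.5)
The plan is to write the relation~\eqref{relation_bousseyroux1} for the matrix $\vb A^{-1}$ at a spectral parameter $w$. Every quantity attached to $\vb A^{-1}$ at $w$ will then be re-expressed in terms of quantities attached to $\vb A$ at $z=1/w$. Finally, $h_{\vb A}(z)$ will be eliminated with the same relation~\eqref{relation_bousseyroux1} applied to $\vb A$. Concretely, set $g=\g_{\vb A^{-1}}(w)$ and write $R:=\mathcal R_{2,\vb A^{-1}}(0,g)$. By the $\omega\to0$ remark after Result~\ref{mainresult} (with the deterministic matrix taken to be $0$), one has $g=1/(w-R)$, that is, $w=R+1/g$. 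Throughout, this is a formal computation valid for $|w|$ large, outside the spectrum; it is then continued analytically, and the branches involved are only ``suitable'' ones.

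The first step is the algebraic identity
\begin{equation*}
(w-\vb A^{-1})^{-1}=-z\,\vb A(z-\vb A)^{-1},\qquad z=1/w .
\end{equation*}
It gives $\g_{\vb A^{-1}}(w)=-z\,\mathfrak t_{\vb A}(z)$ and $h_{\vb A^{-1}}(w)=|z|^2 f_{\vb A}(z)$. Using $\vb A(z-\vb A)^{-1}=z(z-\vb A)^{-1}-\vb 1$, I will expand $f_{\vb A}$ as
\begin{equation*}
f_{\vb A}(z)=|z|^2h_{\vb A}(z)-2\Re\bigl(z\,\g_{\vb A}(z)\bigr)+1 .
\end{equation*}
Next I identify $u$. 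From $g=-z(z\g_{\vb A}(z)-1)$ one gets $\g_{\vb A}(z)=w-gw^2$. Substituting $w=R+1/g$ gives $\g_{\vb A}(z)=-R-gR^2=u$, which is exactly the $u$ of the statement. In the same way, $z\,\g_{\vb A}(z)=1-gw=-gR$.

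The second step is to apply~\eqref{relation_bousseyroux1} to $\vb A$ at $z$. Writing $P:=\partial_\alpha\mathcal R_{1,\vb A}(0,u)$, this gives $1/h_{\vb A}(z)=1/|u|^2-P$, hence $h_{\vb A}=|u|^2/(1-|u|^2P)$. Since $|u|=|R|\,|1/g+R|=|R|/|z|$, we get $|z|^2h_{\vb A}=|R|^2/(1-|u|^2P)$. Then
\begin{equation*}
h_{\vb A^{-1}}(w)=|z|^2\Bigl(\tfrac{|R|^2}{1-|u|^2P}+1+2\Re(gR)\Bigr).
\end{equation*}

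The last step is to insert this into $\partial_\alpha\mathcal R_{1,\vb A^{-1}}(0,g)=1/|g|^2-1/h_{\vb A^{-1}}(w)$. Here $1/|g|^2=|w-R|^2$ is rewritten through $|w|^2=|R+1/g|^2$, so that $|w|^2=|z|^{-2}$ combines with the prefactor of $h_{\vb A^{-1}}$. I will then put everything over a common denominator, expecting the $P$-independent terms to cancel and leave the announced quotient
\begin{equation*}
\frac{|R|^4P}{1-|R|^2\bigl(1+2\Re(gR)\bigr)P}.
\end{equation*}
I expect this simplification to be the main obstacle: it requires carefully tracking $|u|^2=|R|^2|1+gR|^2/|g|^2$ and the identity $|1+gR|^2=1+2\Re(gR)+|gR|^2$, so that the $|g|^{-2}$ terms cancel. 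If it does not close cleanly, I would first recheck the sign in $u$ and the conjugation conventions of $f_{\vb A}$. The genuinely non-rigorous point, which the word ``can'' in the statement reflects, is that relation~\eqref{relation_bousseyroux1} and $g=1/(w-R)$ are used on some branch. The resulting expression therefore provides one branch of $\partial_\alpha\mathcal R_{1,\vb A^{-1}}(0,\cdot)$, without any claim to exhaust all branches.
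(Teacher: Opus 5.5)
Your route is the paper's. You relate $h_{\vb A^{-1}}$ to $f_{\vb A}$ at the inverted point, expand $f_{\vb A}$, identify $\g_{\vb A}(z)=-R-gR^2=u$ via $w=1/g+R$, eliminate $h_{\vb A}$ with \eqref{relation_bousseyroux1} for $\vb A$, and then apply that relation to $\vb A^{-1}$. Every identity you write up to $h_{\vb A}(z)=|u|^2/(1-|u|^2P)$ is correct.

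The step that fails is the modulus of $u$. You write $|u|=|R|\,|1/g+R|=|R|/|z|$, but $u=-R-gR^2=-R(1+gR)$. Hence $|u|=|R|\,|1+gR|=|g|\,|R|/|z|$ and $|u|^2=|R|^2|1+gR|^2$, not $|R|^2|1+gR|^2/|g|^2$ as you state later. Consequently $|z|^2h_{\vb A}(z)=|gR|^2/(1-|u|^2P)$, and the correct expression is $h_{\vb A^{-1}}(w)=|z|^2\bigl(1+2\Re(gR)+|gR|^2/(1-|u|^2P)\bigr)$, which is the paper's formula.

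This slip is not cosmetic: with $|R|^2$ in place of $|gR|^2$, the $P$-independent terms do not cancel as you expect. At $P=0$ your expression gives $1/|g|^2-1/h_{\vb A^{-1}}(w)=|R|^2(1-|g|^2)/\bigl(|g|^2(1+2\Re(gR)+|R|^2)\bigr)\neq0$, whereas the target vanishes. Your fallback of rechecking the sign of $u$ or the conjugations in $f_{\vb A}$ would not find the error.

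With the factor restored, put $a=1+2\Re(gR)$ and $b=|gR|^2$. Then $|1+gR|^2=a+b$, $|z|^2=|g|^2/(a+b)$ and $|u|^2P=|R|^2(a+b)P$, and
\begin{equation*}
\frac{1}{|g|^2}-\frac{1}{h_{\vb A^{-1}}(w)}
=\frac{1}{|g|^2}\,\frac{b\,|R|^2(a+b)P}{(a+b)\bigl(1-a|R|^2P\bigr)}
=\frac{|R|^4P}{1-|R|^2\bigl(1+2\Re(gR)\bigr)P},
\end{equation*}
which is the statement, obtained exactly as in the paper.
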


\begin{proof}
    We start from the elementary identity
    \begin{equation}
        (z\vb 1-\vb A)^{-1}\vb A
        =
        z(z\vb 1-\vb A)^{-1}-\vb 1.
    \end{equation}
    It gives
    \begin{equation}
        f_{\vb A}(z)
        =
        |z|^2h_{\vb A}(z)
        -
        2\Re\bigl(z\g_{\vb A}(z)\bigr)
        +
        1.
    \end{equation}
    Applying this identity to $1/z$, we obtain
    \begin{equation}
        h_{\vb A^{-1}}(z)
        =
        \tau\left(|z\vb 1-\vb A^{-1}|^{-2}\right)
        =
        \frac{1}{|z|^2}f_{\vb A}(1/z).
    \end{equation}
    Hence
    \begin{equation}
        h_{\vb A^{-1}}(z)
        =
        \frac{1}{|z|^2}
        \left(
            \frac{1}{|z|^2}h_{\vb A}(1/z)
            -
            2\Re\left(\frac{1}{z}\g_{\vb A}(1/z)\right)
            +
            1
        \right).
    \end{equation}

    By~\eqref{relation_bousseyroux1}, we have equivalently
    \begin{equation}
        h_{\vb A}(z)
        =
        \frac{
            |\g_{\vb A}(z)|^2
        }
        {
            1
            -
            |\g_{\vb A}(z)|^2
            \partial_\alpha\mathcal R_{1,\vb A}(0,\g_{\vb A}(z))
        }.
    \end{equation}

    We also have
    \begin{equation}
        \g_{\vb A}(1/z)
        =
        -z\,\mathfrak t_{\vb A^{-1}}(z).
    \end{equation}
    If we write
    \begin{equation}
        t=\mathfrak t_{\vb A^{-1}}(z),
    \end{equation}
    then
    \begin{equation}
        \g_{\vb A}(1/z)=-zt.
    \end{equation}
    Therefore
    \begin{equation}
        h_{\vb A}(1/z)
        =
        \frac{
            |z|^2|t|^2
        }
        {
            1
            -
            |z|^2|t|^2
            \partial_\alpha
            \mathcal R_{1,\vb A}(0,-zt)
        }.
    \end{equation}
    Substituting this expression into the formula for $h_{\vb A^{-1}}(z)$ gives
    \begin{equation}
        h_{\vb A^{-1}}(z)
        =
        \frac{1}{|z|^2}
        \left(
            \frac{
                |t|^2
            }
            {
                1
                -
                |z|^2|t|^2
                \partial_\alpha
                \mathcal R_{1,\vb A}(0,-zt)
            }
            +
            2\Re(t)
            +
            1
        \right).
    \end{equation}

    We now set
    \begin{equation}
        g=\g_{\vb A^{-1}}(z),
        \qquad
        t=\mathfrak t_{\vb A^{-1}}(z).
    \end{equation}
    Since
    \begin{equation}
        t=zg-1,
    \end{equation}
    and
    \begin{equation}
        z=\frac{1}{g}+\mathcal R_{2,\vb A^{-1}}(0,g),
    \end{equation}
    we obtain
    \begin{equation}
        t
        =
        g\mathcal R_{2,\vb A^{-1}}(0,g).
    \end{equation}
    Thus
    \begin{equation}
        zt
        =
        \left(
            \frac{1}{g}
            +
            \mathcal R_{2,\vb A^{-1}}(0,g)
        \right)
        g\mathcal R_{2,\vb A^{-1}}(0,g),
    \end{equation}
    that is,
    \begin{equation}
        zt
        =
        \mathcal R_{2,\vb A^{-1}}(0,g)
        +
        g\mathcal R_{2,\vb A^{-1}}(0,g)^2.
    \end{equation}
    Consequently,
    \begin{equation}
        -zt
        =
        -\mathcal R_{2,\vb A^{-1}}(0,g)
        -
        g\mathcal R_{2,\vb A^{-1}}(0,g)^2.
    \end{equation}

    To simplify the notation, write
    \begin{equation}
        R=\mathcal R_{2,\vb A^{-1}}(0,g),
    \end{equation}
    and
    \begin{equation}
        D_A(g)
        =
        \partial_\alpha
        \mathcal R_{1,\vb A}
        \left(
            0,
            -R-gR^2
        \right).
    \end{equation}
    Since
    \begin{equation}
        z=\frac{1+gR}{g},
    \end{equation}
    we have
    \begin{equation}
        \frac{1}{|z|^2}
        =
        \frac{|g|^2}{|1+gR|^2}.
    \end{equation}
    Moreover,
    \begin{equation}
        R+gR^2=R(1+gR),
    \end{equation}
    and therefore
    \begin{equation}
        |R+gR^2|^2
        =
        |R|^2|1+gR|^2.
    \end{equation}
    Hence
    \begin{equation}
        h_{\vb A^{-1}}(z)
        =
        \frac{|g|^2}{|1+gR|^2}
        \left(
            1
            +
            2\Re(gR)
            +
            \frac{|gR|^2}
            {
                1
                -
                |R|^2|1+gR|^2D_A(g)
            }
        \right).
    \end{equation}

    Using~\eqref{relation_bousseyroux1} for $\vb A^{-1}$, we get
    \begin{equation}
        \partial_\alpha
        \mathcal R_{1,\vb A^{-1}}(0,g)
        =
        \frac{1}{|g|^2}
        -
        \frac{1}{h_{\vb A^{-1}}(z)}.
    \end{equation}
    Since
    \begin{equation}
        |1+gR|^2
        =
        1
        +
        2\Re(gR)
        +
        |gR|^2,
    \end{equation}
    a direct simplification gives
    \begin{equation}
        \partial_\alpha
        \mathcal R_{1,\vb A^{-1}}(0,g)
        =
        \frac{
            |R|^4D_A(g)
        }
        {
            1
            -
            |R|^2
            \left(
                1+2\Re(gR)
            \right)
            D_A(g)
        }.
    \end{equation}
    Replacing $R$ and $D_A(g)$ by their definitions gives exactly
    \begin{equation}
        \partial_\alpha
        \mathcal R_{1,\vb A^{-1}}(0,g)
        =
        \frac{
            |\mathcal R_{2,\vb A^{-1}}(0,g)|^4
            \,
            \partial_\alpha
            \mathcal R_{1,\vb A}(0,u)
        }
        {
            1
            -
            |\mathcal R_{2,\vb A^{-1}}(0,g)|^2
            \left(
                1
                +
                2\Re\bigl(g\mathcal R_{2,\vb A^{-1}}(0,g)\bigr)
            \right)
            \partial_\alpha
            \mathcal R_{1,\vb A}(0,u)
        },
    \end{equation}
    with
    \begin{equation}
        u
        =
        -\mathcal R_{2,\vb A^{-1}}(0,g)
        -
        g\mathcal R_{2,\vb A^{-1}}(0,g)^2.
    \end{equation}
    This concludes the proof.
\end{proof}

\subsection{Proof of Eq.\eqref{formula_produit} from the main result \ref{mainresult}}\label{from}

Let $\vb{M}$ be a bi-invariant random matrix. Using Gauss's law and the radial symmetry of the spectral distribution of $\vb{M}$, one obtains
\begin{equation}\label{g}
    \mathfrak{g}_{\vb{M}}(z) = \frac{F_{\vb{M}}(|z|)}{z}.
\end{equation}
Moreover, from~\cite{belinschi2017squared}, we have
\begin{equation}\label{o}
    \mathfrak{o}_{\vb{M}}(z)^2
    =
    \frac{F_{\vb{M}}(|z|)\bigl(1-F_{\vb{M}}(|z|)\bigr)}{|z|^2},
\end{equation}
where $F_{\vb{M}}$ is defined in Eq.~\eqref{defF}.

From the introduction of~\cite{bousseyroux1}, one obtains that, when $\vb{B}$ is a bi-invariant matrix,
\begin{equation}
    \mathcal{R}_{1,\vb{B}}(\alpha) = R_{\mathbb{B}}(\alpha),
\end{equation}
where
\begin{equation}\label{hermitianise}
\mathbb{B} =
\begin{pmatrix}
0 & \vb{B}\\
\vb{B}^\ast & 0
\end{pmatrix},
\end{equation}
and
\begin{equation}
    \mathcal{R}_{2,\vb{B}} = 0.
\end{equation}
Furthermore,
\begin{equation}
    \mathfrak{g}_{\mathbb{B}}(z)
    =
    z\,\mathfrak{g}_{\vb{B}\vb{B}^\ast}(z^2),
\end{equation}
and hence, using \eqref{definitionS_classical}, we have
\begin{equation}\label{rela}
    S_{\vb{B}\vb{B}^\ast}(t)
    =
    \frac{t}{t+1}\,S_{\mathbb{B}}(t)^2.
\end{equation}

We are now ready to prove \eqref{formula_produit}. Let $\vb{A}$ and $\vb{B}$ be two bi-invariant random matrices. Using the main result~\ref{mainresult}, we have
\begin{equation}
    \mathfrak{g}_{\vb{A}\vb{B}}(z)
    =
    -\tau\left[
        z^\ast
        \left(
            \mathcal{R}_{1,\vb{B}}(G)^2\vb{A}\vb{A}^\ast
            -
            |z|^2
        \right)^{-1}
    \right],
\end{equation}
\begin{equation}
    \mathfrak{o}_{\vb{A}\vb{B}}(z)^2
    =
    G\mathcal{R}_{1,\vb{B}}(G)
    \,
    \tau\left[
        \left(
            \mathcal{R}_{1,\vb{B}}(G)^2\vb{A}\vb{A}^\ast
            -|z|^2
        \right)^{-1}
    \right],
\end{equation}
and
\begin{equation}\label{eq30}
    G
    =
    -\mathcal{R}_{1,\vb{B}}(G)
    \,
    \tau\left[
        \vb{A}\vb{A}^\ast
        \left(
            \mathcal{R}_{1,\vb{B}}(G)^2\vb{A}\vb{A}^\ast
            -|z|^2
        \right)^{-1}
    \right].
\end{equation}
The first two equations give
\begin{equation}
    -\mathfrak{g}_{\vb{A}\vb{B}}(z)\,
    G\,\mathcal{R}_{1,\vb{B}}(G)
    =
    z^\ast \mathfrak{o}_{\vb{A}\vb{B}}(z)^2.
\end{equation}
Equations~\eqref{g} and~\eqref{o} then imply that
\begin{equation}\label{eq4}
    G\,\mathcal{R}_{1,\vb{B}}(G)
    =
    F_{\vb{A}\vb{B}}(|z|)-1.
\end{equation}
Therefore, using \eqref{rela}, we obtain
\begin{equation}\label{eq45}
    S_{\vb{B}\vb{B}^\ast}\bigl(F_{\vb{A}\vb{B}}(|z|)-1\bigr)
    =
    \frac{F_{\vb{A}\vb{B}}(|z|)-1}
    {F_{\vb{A}\vb{B}}(|z|)\mathcal{R}_{1,\vb{B}}(G)^2}.
\end{equation}
Going back to equation~\eqref{eq30}, we get
\begin{equation}
    G\mathcal{R}_{1,\vb{B}}(G)
    =
    \mathfrak{t}_{\vb{A}\vb{A}^\ast}
    \left(
        \left\lvert
        \frac{z}{\mathcal{R}_{1,\vb{B}}(G)}
        \right\rvert^2
    \right),
\end{equation}
and therefore
\begin{equation}
    F_{\vb{A}\vb{B}}(|z|)-1
    =
    \mathfrak{t}_{\vb{A}\vb{A}^\ast}
    \left(
        \left\lvert
        \frac{z}{\mathcal{R}_{1,\vb{B}}(G)}
        \right\rvert^2
    \right).
\end{equation}
The relation \eqref{definitionS_classical} gives
\begin{equation}
    S_{\vb{A}\vb{A}^\ast}\bigl(F_{\vb{A}\vb{B}}(|z|)-1\bigr)
    =
    \frac{F_{\vb{A}\vb{B}}(|z|)}
    {F_{\vb{A}\vb{B}}(|z|)-1}
    \frac{|\mathcal{R}_{1,\vb{B}}(G)|^2}
    {|z|^2}.
\end{equation}
Thus, with equation~\eqref{eq45}, we obtain
\begin{equation}
    S_{\vb{A}\vb{A}^\ast}\bigl(F_{\vb{A}\vb{B}}(|z|)-1\bigr)
    S_{\vb{B}\vb{B}^\ast}\bigl(F_{\vb{A}\vb{B}}(|z|)-1\bigr)
    =
    \frac{1}{|z|^2}.
\end{equation}

\end{document}